\newtheorem{Remark}{Remark}
\newtheorem{Definition}{Definition}
\newtheorem{Lemma}{Lemma}
\newtheorem{Corollary}{Corollary}
\newtheorem{Proposition}{Proposition}
\newcommand{\sN}{\mathcal{N}}
\newcommand{\sL}{\mathcal{L}}
\newcommand{\abs}[1]{\left\vert #1 \right\vert}
\newcommand{\norm}[1]{\left\Vert #1 \right\Vert}
\newcommand{\R}{{\mathbb R}}  %ams bold
\newcommand{\black}[1]{{\color{black} #1}}
\begin{document}
\title{Scalability in nonlinear network systems affected by delays and disturbances}
\author{Shihao Xie, \IEEEmembership{Student Member, IEEE}, Giovanni Russo, \IEEEmembership{Senior Member, IEEE}, Richard H. Middleton, \IEEEmembership{Fellow, IEEE}\thanks{\textbf{This is an authors’ version of the work that is published in IEEE Transactions on Control of Network Systems, Volume: 8, Issue: 3, September 2021. Changes were made to this version by the publisher prior to publication. The final version of record is available at \url{https://doi.org/10.1109/TCNS.2021.3058934}}}\thanks{S. Xie (e-mail: \url{shihao.xie1@ucdconnect.ie}) is with the School of Electrical and Electronic Engineering, University College Dublin, Belfield, Dublin $4$, Ireland. G. Russo (e-mail: \url{giovarusso@unisa.it}) is with the Department of Information and Electric Engineering and Applied Mathematics, University of Salerno, Fisciano (SA) $84084$, Italy. R. H. Middleton (e-mail: \url{richard.middleton@newcastle.edu.au }) is with the School of Electrical Engineering and Computing, University of Newcastle, Callaghan NSW $2308$, Australia}}
\maketitle
\begin{abstract}
This paper is concerned with the study of scalability in nonlinear heterogeneous networks affected by communication delays and disturbances.  After formalizing the notion of scalability, we give two sufficient conditions to assess this property. Our results can be used to study  {leader-follower} and leaderless networks and also allow to consider the case when the desired configuration of the system changes over time. We show how our conditions can be turned into design guidelines to guarantee scalability and illustrate their effectiveness via numerical examples.
\end{abstract}
\thispagestyle{empty}
\pagestyle{plain}
\section{Introduction}

Network systems have considerably evolved, increasing not only their scale but also the complexity of their topology \cite{8684336}. Examples of  large-scale networks include autonomous vehicles \cite{STUDLI2017157},  robotic formations \cite{4200874}, neural networks \cite{7336524}. There is  no surprise that a large body of literature devoted to the study of collective behaviours has emerged, with e.g. consensus and synchronization attracting much research attention \cite{1333204,7206553,8528873}. 

In this context, a key challenge is the design of protocols that do not just guarantee \textit{stability} (fulfillment of desired behaviour for a fixed network)  but also that the network is {\em scalable}. \textcolor{black}{We use \textit{scalability} to denote the preservation of desired stability properties (to be defined more formally in Section \ref{sec:formalize_problem}) uniformly with respect to the size of the network}. {Scalability} is then a fundamental requirement for network systems  spanning from e.g. platoons of vehicles to neural networks. \textcolor{black}{For example, as noted in e.g. \cite{folli}, there is an intrinsic limit (this limit  appears to be approximately 14\% of the network size) for recurrent networks that precludes them to { store} an arbitrarily large number of memory patterns. Hence,  to increase memory capacity in general requires  increasing the number of neurons. However, if not properly designed, adding new layers/{neurons} might lead to the amplification of disturbances/biases as these propagate through the network \cite{pmlr-v120-bonassi20a}. Designing the network so that this is not just stable but also scalable avoids the onset of this undesired behaviour.}  Motivated by this, we: (i) introduce a notion of scalability for networks affected by delays; (ii) give sufficient conditions to assess this property; (iii) show how our approach can be turned into design guidelines for scalability. 

\subsubsection*{Related work} 
the study of how disturbances propagate within a network is a central topic for autonomous vehicle-following systems. In the context of platoons much research effort has been devoted to the study of string stability \cite{swaroop1996string}. The key idea behind the several definitions of string stability proposed in the literature \cite{FENG201981} is that of giving upper bounds on certain deviations of the individual agents from a reference that are independent on the number of vehicles. See e.g. \cite{swaroop1996string,STUDLI2017157,FENG201981,knorn2014passivity,MONTEIL2019198} for recent results and a survey of the related literature. In the above works, results are obtained under the assumption that the network system is delay-free and some extensions to strings affected by communication delays are given in e.g. \cite{6891349,7839205} for homogeneous, disturbance-free, linear systems.  Works on scalability for networks with arbitrary topologies are sparse when compared to works on string stability. For networks with linear agents, results include \cite{6209387,8684336} where network coherence is characterized as a function of the number of its agents,\cite{7442805} where performance deterioration in networks subject to external stochastic disturbances are considered, \cite{8070997} where certain network performance metrics are studied as a function of the number of edges. Other related works include these on leader-to-formation stability, see e.g. \cite{1303690}, that was introduced to characterize the behaviour of (disturbance-free) formations with respect to the inputs provided by leaders and these on mesh stability, see e.g. \cite{Pant2001}, that offer a generalization of string stability to (linear and distubance-free) networks with regular topologies. Other results include \cite{articleBesselink}, where sufficient conditions for the scalability of delay-free leaderless networks with homogeneous agents interacting over regular topologies are introduced. \textcolor{black}{Our proofs leverage contraction-theory arguments for time-delayed systems. We recall  \cite{1618853} which shows, using the Euclidean metric, how contraction is preserved through certain time-delayed communications, and  \cite{9029867} which, by extending the integral quadratic constraints method,  provides a tighter characterisation of delays. Finally, we make use of max-separable metrics and we refer the reader to e.g. \cite{8561231}, where conditions for the synthesis of distributed controls are given by using separable metric structures.}

\subsubsection*{Statement of the contributions} \textcolor{black}{our results are based on the key observation that, if a system is contracting with respect to a max-separable metric, then this is also scalable.  We then give a set of sufficient conditions (independent on the bounds of the delays) for scalability of network systems}  consisting of  heterogeneous nonlinear agents that communicate via possibly nonlinear protocols. The agents are affected by external disturbances and communication delays. In particular:
\begin{itemize}
\item for these delayed networks, we formalize the notions of $\sL_{\infty}$-scalable-Input-to-State {Stable} and $\sL_{\infty}$-scalable-Input-Output {Stable} ($\sL_{\infty}$-sISS and $\sL_{\infty}$-sIOS) networks;
\item we give two sufficient conditions to assess these properties. To the best of our knowledge, these are the first results that tackle the problem of guaranteeing scalability of nonlinear networks of heterogeneous agents affected by both disturbances and delays. Moreover, our results can be used to study both {leader-follower} and leaderless networks and also allow to consider the case where the desired configuration of the system changes over time;
\item we show that our conditions can be turned into design guidelines for scalability. We do so by first showing how our approach can be used to design   protocols that, while guaranteeing the tracking of a time-varying speed profile,  ensure scalability for a network of mobile agents. We then show how the approach can be used to devise conditions on the activation functions (and their weights) to guarantee scalability of certain  neural networks. \textcolor{black}{Motivated by applications such as associative memory, where it is of interest to study stability of equilibria when the network is forced by a constant input \cite{ZHOU201644}, the case we consider is when the recurrent network receives as input a constant vector, {possibly affected by a bias,} and the desired output is an equilibrium point (see also \cite{8329007,FAYDASICOK2020288}).} To the best of our knowledge, these are  the first results that explicitly address scalability in neural networks.
\end{itemize}
%%%%%%%%%%%%%%%%%%%%%%%%%%%%%%%%%%%%%%%%%%%%%%%%%%%%%%%%%%%%%%%
%\SX{\red{in the neural network part, should we say that 'our results set the conditions to the weights to achieve scalability given the activation functions'.}}

\section{Mathematical preliminaries}\label{sec:math}
Let $A$ be a $m \times m$ real matrix. We denote by $\norm{A}_p$ the matrix norm induced by the $p$-vector norm $\abs{\cdot}_p$. We recall that (see e.g. \cite{Vid_93}) the matrix measure induced by $\abs{\cdot}_p$ is defined as $\mu_p(A):= \lim\limits_{h \rightarrow 0^+}\frac{1}{h}(\norm{I+h A}_p-1)$. In this work, we make use of $\mu_2(A):=\lambda_{\max}(A^T+A)/2$ and {$\mu_\infty(A):=\max_i[a_{ii}+\sum_{j \ne i}\abs{a_{ij}}]$}. We also denote by $\sigma_{\min}(A)$ ($\sigma_{\max}(A)$) the smallest (largest) singular value of $A$. Given the piece-wise continuous signal $d_i(t)$: $[0,+\infty) \rightarrow \mathbb{R}^m$, we let $\Vert d_i(\cdot) \Vert_{\mathcal{L}_{\infty}}:= \sup_t\abs{d_i(t)}_2$. {Let {$f(x_1,\ldots,x_N)$} be a smooth function in all its arguments, with $x_i\in\R^n$. Then, {$\partial_if(x_1,\ldots,x_N)$}  denotes the partial derivative {$\frac{\partial f(x_1,\ldots,x_N)}{\partial x_i}$}. Throughout the paper, the $m$-dimensional identity matrix is denoted by $I_m$ while the $m\times n$ zero matrix is $0_{m\times n}$. We recall that a continuous function ${\kappa}: \R_+ \rightarrow \R_+$ is said to be a class-$\mathcal{K}$ function  if it is strictly increasing and ${\kappa(0)}=0$. It is said to belong to class-$\mathcal{K}_\infty$ if  ${\kappa(r)} \rightarrow +\infty$ as $r \rightarrow +\infty$. A continuous function $\beta: \R_+ \times \R_+ \rightarrow \R_+$ is said to be a class-$\mathcal{KL}$ function if, for any fixed $s$, $\beta(\cdot,s)$ is of class-$\mathcal{K}$ and, for each fixed $r$, $\beta(r,\cdot)$ is decreasing and $\beta(r,s) \rightarrow 0$ as $s \rightarrow \infty$. 
\subsubsection*{Useful results}
\textcolor{black}{The next result follows directly from \cite{5717887}.} We let $\abs{\cdot}_S$ and $\mu_S(\cdot)$ be, respectively, any $p$-vector norm and its induced matrix measure on $\mathbb{R}^N$. In particular, the norm $\abs{\cdot}_S$ is monotone, i.e. for any vector $x,y \in \R_{+}^N$, $x\le y$ implies that $\abs{x}_S \le \abs{y}_S$ where the inequality $x\le y$ is component-wise.
\begin{Lemma}\label{lem:matrix norm}
Consider the vector $\eta:=[\eta_1^T,\ldots,\eta_N^T]^T$, $\eta_i \in \mathbb{R}^n$. We let $\abs{\eta}_G := \abs{\left[ \abs{\eta_1}_{G_1},\ldots,\abs{\eta_N}_{G_N}\right]}_S$, with $\abs{\cdot}_{G_i}$ being norms on $\R^n$, and denote by $\norm{\cdot}_G, \mu_G(\cdot)$ ($\norm{\cdot}_{G_i}, \mu_{G_i}(\cdot)$) the matrix norm and measure induced by $\abs{\cdot}_G$ ($\abs{\cdot}_{G_i}$). Finally, let: 
\begin{enumerate}
\item $A:=({A_{ij}})_{i,j=1}^N \in \mathbb{R}^{nN \times nN}$, $A_{ij} \in \mathbb{R}^{n \times n}$; 
\item $\hat{A}:=(\hat{A}_{ij})_{i,j=1}^N \in \mathbb{R}^{N \times {N}}$, with $\hat{A}_{ii}:=\mu_{G_i}(A_{ii})$ and $\hat{A}_{ij}:=\norm{A_{ij}}_{G_{i,j}}$, {$\norm{A_{ij}}_{G_{i,j}}:=\sup_{\abs{x}_{G_i}=1}\abs{A_{ij}x}_{G_j}$};
\item $\bar{A}:=(\bar{A}_{ij})_{i,j=1}^N \in \mathbb{R}^{N \times N}$, with $\bar{A}_{ij}:=\norm{A_{ij}}_{G_{i,j}}$. 
\end{enumerate}
Then: (i) $\mu_G(A) \le \mu_S(\hat{A})$; (ii) $\norm{A}_G \le \norm{{\bar A}}_S$. 
\end{Lemma}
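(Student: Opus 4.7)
The plan is to handle part (ii) first by direct computation and then bootstrap it into part (i) via the definition of the matrix measure.

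For part (ii), I would fix any $\eta=[\eta_1^T,\ldots,\eta_N^T]^T$ and denote by $v\in\R^N_+$ the vector of block norms $v_i:=\abs{\eta_i}_{G_i}$, so that $\abs{\eta}_G=\abs{v}_S$. Since the $i$-th block of $A\eta$ equals $\sum_j A_{ij}\eta_j$, the triangle inequality together with the definition of the induced norm $\norm{A_{ij}}_{G_{i,j}}$ yields $\abs{(A\eta)_i}_{G_i}\le\sum_j \norm{A_{ij}}_{G_{i,j}}\abs{\eta_j}_{G_j}=(\bar A v)_i$. Since both vectors are nonnegative, monotonicity of $\abs{\cdot}_S$ gives $\abs{A\eta}_G\le \abs{\bar A v}_S\le\norm{\bar A}_S \abs{v}_S=\norm{\bar A}_S\abs{\eta}_G$, and taking the supremum over $\abs{\eta}_G=1$ delivers (ii).

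For part (i), the idea is to apply (ii) to the perturbed matrix $I+hA$, $h>0$, and then recover the matrix measure by letting $h\to 0^+$. The diagonal blocks of $I+hA$ are $I+hA_{ii}$ while the off-diagonal ones remain $hA_{ij}$. Bound (ii) therefore gives $\norm{I+hA}_G\le \norm{M_h}_S$, where $M_h\in\R^{N\times N}$ has $(M_h)_{ii}=\norm{I+hA_{ii}}_{G_i}$ and $(M_h)_{ij}=h\norm{A_{ij}}_{G_{i,j}}$ for $i\ne j$. Using the definition of the matrix measure componentwise, $\norm{I+hA_{ii}}_{G_i}=1+h\mu_{G_i}(A_{ii})+o(h)$, so that
\[
M_h = I_N + h\hat A + o(h),
\]
with the $o(h)$ perturbation confined to the diagonal. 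Subtracting $1$, dividing by $h$, and letting $h\to 0^+$ on both sides of $\norm{I+hA}_G-1\le \norm{M_h}_S-1$ recovers $\mu_G(A)\le\mu_S(\hat A)$ from the very definition of $\mu_S(\hat A)$.

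The main technical point is justifying the limit passage in (i): the $o(h)$ diagonal perturbation of $M_h$ has to be shown not to contaminate the limit $\lim_{h\to 0^+}h^{-1}(\norm{M_h}_S-1)$. This is where Lipschitz continuity of $\norm{\cdot}_S$ is invoked, writing $M_h=I_N+h(\hat A+E_h)$ with $E_h\to 0$, so that $\norm{M_h}_S-1\le \norm{I_N+h\hat A}_S-1+h\norm{E_h}_S$; dividing by $h$ and taking $h\to 0^+$ eliminates the perturbation term and leaves exactly $\mu_S(\hat A)$. Everything else is book-keeping. I expect the monotonicity of $\abs{\cdot}_S$ (explicitly listed in the statement) to be essential in (ii) and implicitly needed in (i) through its use in (ii), so I would flag its role clearly in both steps.
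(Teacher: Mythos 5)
Your proof is correct. Note that the paper does not actually prove this lemma --- it states it as following directly from reference [5717887] (Russo, di Bernardo, Sontag) --- and your argument (block-wise triangle inequality plus monotonicity of $\abs{\cdot}_S$ for the norm bound, then applying that bound to $I+hA$ and controlling the diagonal $o(h)$ perturbation via the triangle inequality before passing to the limit) is precisely the standard one used there, with the limit passage for part (i) handled carefully. The only cosmetic point is that you silently read the induced block norm as $\sup_{\abs{x}_{G_j}=1}\abs{A_{ij}x}_{G_i}$, which is the natural convention for a block mapping the $j$-th coordinates into the $i$-th output; the statement's $\sup_{\abs{x}_{G_i}=1}\abs{A_{ij}x}_{G_j}$ appears to be an index typo, so your reading is the right one.
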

The next proposition follows from Theorem $2.4$ in \cite{WEN2008169}.
\begin{Lemma}\label{prop:halanay}
Let $u:[-\tau_0,+\infty)\rightarrow\R_+$ , $\tau_0<+\infty$ and assume that
$$D^+u(t) \le au(t)+b \sup_{t-\tau(t) \le s \le t}u(s)+c, \ \ \ t\ge 0$$
with: (i) $\tau(t)$ being bounded and non-negative, i.e. $0{\le}\tau(t)\le\tau_0$, $\forall t$; (ii) $u(t)=\abs{\varphi(t)}$, $\forall t\in [-\tau_0,0]$ where $\varphi(t)$ is bounded in $[-\tau_0,0]$; (iii) $a < 0$, $b \ge 0$ and $c \ge 0$. Assume that there exists some $\sigma>0$ such that $a+b \le -\sigma <0, \forall t\ge 0$. Then:
$$u(t) \le \sup_{-\tau_0 \le s \le 0}u(s)e^{-\hat \lambda t}+\frac{c}{\sigma}$$
where $\hat \lambda:=\inf_{t\ge 0}\{\lambda|\lambda(t) + a+be^{\lambda(t)\tau(t)}=0\}$ is positive.
\end{Lemma}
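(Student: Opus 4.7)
The plan is to use a comparison argument against the explicit candidate upper bound $w(t) := M e^{-\hat\lambda t} + c/\sigma$, where $M := \sup_{-\tau_0 \le s \le 0} u(s)$. Before doing this, I would first verify that $\hat\lambda$ is well-defined and strictly positive. For each fixed $t$, the map $\lambda \mapsto \lambda + a + b e^{\lambda \tau(t)}$ is continuous and strictly increasing, taking the value $a+b \le -\sigma < 0$ at $\lambda = 0$ and tending to $+\infty$ as $\lambda \to +\infty$, so it admits a unique positive root $\lambda(t)$. Implicit differentiation in $\tau$ shows that $\lambda(t)$ is non-increasing in $\tau(t)$; combined with $\tau(t) \le \tau_0$, this yields $\hat\lambda \ge \lambda^\ast > 0$, where $\lambda^\ast$ solves $\lambda^\ast + a + b e^{\lambda^\ast \tau_0} = 0$.

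Next, to obtain a strict super-solution, I would perturb the candidate as $w_\epsilon(t) := (M+\epsilon) e^{-\hat\lambda t} + c/\sigma + \epsilon$ for $\epsilon > 0$, and show that
$$D^+ w_\epsilon(t) > a\, w_\epsilon(t) + b \sup_{t-\tau(t) \le s \le t} w_\epsilon(s) + c, \qquad t \ge 0.$$
Since $w_\epsilon$ is monotonically decreasing, the supremum equals $w_\epsilon(t-\tau(t))$, and the displayed inequality reduces to
$$-\hat\lambda (M+\epsilon) e^{-\hat\lambda t} > (M+\epsilon) e^{-\hat\lambda t}\bigl[a + b e^{\hat\lambda \tau(t)}\bigr] + (a+b)\bigl(c/\sigma + \epsilon\bigr) + c.$$
The bracketed term is bounded above by $-\hat\lambda$ by the definition of $\hat\lambda$ (using monotonicity of the defining function in $\lambda$), while $(a+b)(c/\sigma + \epsilon) + c \le -\sigma \epsilon < 0$ by the hypothesis $a+b \le -\sigma$, so the strict inequality holds.

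Finally, I would run the standard comparison argument. By construction $u(s) \le M < w_\epsilon(s)$ on $[-\tau_0, 0]$. Suppose, for contradiction, that $u(t) > w_\epsilon(t)$ at some $t > 0$; continuity of $u$ (implicit in $D^+ u$ being finite) yields a first crossing time $t^\ast > 0$ at which $u(t^\ast) = w_\epsilon(t^\ast)$ and $u(s) \le w_\epsilon(s)$ for all $s \le t^\ast$. Then the hypothesis gives
$$D^+ u(t^\ast) \le a\,u(t^\ast) + b \sup_{t^\ast-\tau(t^\ast)\le s \le t^\ast} u(s) + c \le a\, w_\epsilon(t^\ast) + b \sup w_\epsilon + c < D^+ w_\epsilon(t^\ast),$$
contradicting $D^+(u-w_\epsilon)(t^\ast) \ge 0$, which must hold at a crossing from below. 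Hence $u(t) \le w_\epsilon(t)$ for all $t \ge 0$, and letting $\epsilon \to 0^+$ delivers the stated bound. The main obstacle is bookkeeping the delayed supremum in the comparison: the $\epsilon$-perturbation provides the strict gap needed to close the sub-/super-solution argument in the presence of memory, and the uniform positivity of $\hat\lambda$ (via $\tau(t) \le \tau_0$) is what prevents the decay rate from degenerating.
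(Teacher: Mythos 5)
Your proof is correct, but it takes a different route from the paper in the most literal sense: the paper does not prove this lemma at all, it simply imports it as a special case of Theorem~2.4 of the cited reference on generalized Halanay inequalities (Wen, Yu and Wang, \emph{J.\ Math.\ Anal.\ Appl.}, 2008), where the result is established for the more general dissipativity setting of Volterra functional differential equations. Your argument is the standard self-contained alternative: verify that $\hat\lambda=\inf_t\lambda(t)$ is well defined and uniformly positive (your monotonicity of the root in $\tau$, combined with $\tau(t)\le\tau_0$, is exactly the right way to rule out degeneration of the rate), build the strict super-solution $w_\epsilon(t)=(M+\epsilon)e^{-\hat\lambda t}+c/\sigma+\epsilon$, run a first-crossing comparison, and let $\epsilon\to 0^+$. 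All the key inequalities check out: $a+be^{\hat\lambda\tau(t)}\le-\hat\lambda$ follows from $g_t(\hat\lambda)\le g_t(\lambda(t))=0$ with $g_t$ increasing, and $(a+b)(c/\sigma+\epsilon)+c\le-\sigma\epsilon<0$ supplies the strict gap that makes the crossing argument close despite the delayed supremum. What your approach buys is transparency and self-containment (the reader sees exactly where $a+b\le-\sigma$ and the boundedness of $\tau$ enter); what the paper's citation buys is brevity and access to a more general statement. Two minor points worth making explicit if you write this up: you should assume (or note) that $u$ is continuous on $[0,+\infty)$ --- finiteness of the upper right Dini derivative alone does not give this, though it is automatic in the paper's application where $u(t)=\abs{z(t)}_G$ --- and you should state that $w_\epsilon$ is extended by the same formula to $[-\tau_0,0]$ so that the delayed supremum is controlled there as well, which you implicitly use via $w_\epsilon(s)\ge M+\epsilon>u(s)$ on that interval.
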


\section{Statement of the control problem}\label{sec:formalize_problem}
We consider a network of $N$ heterogeneous agents, possibly receiving inputs from a set of $L$ leaders. The dynamics of the $i$-th agent, $i=1,..., N$, is modelled via
\begin{align}
    \dot{x}_i&={f_i(x_i,t)}+u_i(t)+b_i(x_i,t)d_i(t), \ \ \ t\ge 0 \label{dynamics_1}\\    
    y_i&=g_i(x_i)\label{dynamics_2}
\end{align}
with $x_i \in \mathbb{R}^n$, $u_i(t) \in \mathbb{R}^n$ being the control protocol, $d_i(t)$ being an $n$-dimensional signal modelling a (deterministic) external disturbance on the agent. In the  dynamics, $b_i:\R^n\times\R_+\rightarrow\R^{n\times n}$ models a time- and state-dependent disturbance intensity function. The function $f_i:\mathbb{R}^n {\times \R_+} \rightarrow \mathbb{R}^n$ is the intrinsic dynamics of the agent, $g_i:\R^n \rightarrow \R^m$ is the output function. The noise intensity is bounded, i.e. $\max_{x_i,t}{\norm{b_i(x_i,t)}_2}\le \bar b$, $\forall i$ and {$f_i(\cdot,\cdot)$}, $b_i(\cdot,\cdot)$, $g_i(\cdot)$ are smooth in  their arguments. 
We consider protocols of the form
\begin{equation}\label{control_protocol}
\begin{split}
u_i(t)&=\sum_{j \in \sN_i} {h_{ij}^{(\tau)}} \big(x_i(t-\tau(t)),x_j(t-\tau(t)),t\big)\\
&+\sum_{j \in \sN_i} {h_{ij}}\big(x_i(t),x_j(t),t\big)\\
&+\sum_{l\in \sL_i} {\underline{h}_{il}^{(\tau)}}\big(x_i(t-\tau(t)),x_l(t-\tau(t)),t\big)\\
&+\sum_{l\in \sL_i}{\underline{h}_{il}}\big(x_i(t),x_l(t),t\big)
\end{split}
\end{equation}
with $\tau(t)\le \tau_0$ $\forall t$,} $x_i(s)=\varphi_i(s)$, { $\varphi_i(s):\R\rightarrow\R^n$ being continuous and} bounded $\forall s \in [-{\tau_0},0]$, $\forall i=1,\ldots,N$. In (\ref{control_protocol}): (i) $\mathcal{N}_i$ denotes the set of neighbours of agent $i$ and $\sL_i$ is the set of leaders to which the $i$-th agent is possibly connected; (ii) the functions $h_{ij}:\R^n\times \R^n\times\R_+  \rightarrow\R^n$ and $\underline{h}_{il}:\R^n\times\R^{n}\times\R_+  \rightarrow\R^n$ are the {\em delay free} inter-agent and agent-leader coupling functions respectively. Analogously, {$h_{ij}^{(\tau)}:\R^n\times \R^n\times\R_+  \rightarrow\R^n$} and {$\underline{h}_{il}^{(\tau)}:\R^n\times \R^{n}\times\R_+ \rightarrow\R^n$} are coupling functions for the delayed information. That is, in (\ref{control_protocol}) we have separate couplings for the delay-free and delayed communication. 
%In (\ref{control_protocol}) if we set  the \black{$h_{ij}^{(\tau)}$}'s and \black{$\underline{h}_{il}^{(\tau)}$}'s to $0$ we obtain a delay-free network. String stability of these nonlinear networks has been investigated in e.g. \cite{MONTEIL2019198}. If we set the \black{$h_{ij}$}'s and \black{$\underline{h}_{il}$}'s to $0$ we get a network with delayed coupling. Stability of these networks have been widely investigated in the literature on e.g. pinning synchronization and consensus.
%see \cite{Tian2009RobustCO,6517296}.%,5404835,5406004,1429377
\begin{Remark}
Situations where there is an overlap between a delay-free and delayed communication naturally arise in a number of applications. For example, in platooning \cite{PETERS201464}, certain states (e.g. separation obtained by radar) from the neighbours might be available at an agent with a negligible/no delay. Other information, such as separation from more distant agents, or neighbour control actions, may require communications and be subject to measurement/processing delays. Protocols of the form of (\ref{control_protocol}) can also be used to study the popular neural network model of e.g \cite{4666771}. {In this case,  { agents} are neurons and a delay-free communication coupling models an activation function from the closest neurons.  The delayed communication can instead model interactions from neurons farther away in the network.}
\end{Remark}
\subsection{Control goal}
As in \cite{MONTEIL2019198} we state our control goal in terms of a {\em desired} solution of the unperturbed dynamics of (\ref{dynamics_1}) - (\ref{control_protocol}). We let $x^d(t)=[x_1^d(t)^T,\ldots,x_N^d(t)^T]^T$, $\dot{x}^d_i(t)=f_i(x_i^d,t)$,  be the desired state/solution of the network when there are no disturbances and $y^d(t)=[y_1^d(t)^T,\ldots,y_N^d(t)^T]^T$, with $y^d_i(t)=g_i(x_i^d(t))$, be the desired output. Our goal is to design  (\ref{control_protocol}) so that (\ref{dynamics_1}) - (\ref{dynamics_2})
is either {$\sL_{\infty}$-scalable-Input-to-State Stable ($\sL_{\infty}$-sISS) or $\sL_{\infty}$-scalable-Input-Output Stable ($\sL_{\infty}$-sIOS}):
\begin{Definition}\label{def:ISFS}
The closed loop network system (\ref{dynamics_1}) - (\ref{control_protocol}) is:
\begin{itemize}
\item $\mathcal{L}_{\infty}$-scalable-Input-to-State  Stable ($\mathcal{L}_{\infty}$-sISS) if there exists some class-$\mathcal{KL}$ function, $\beta$, and class-$\mathcal{K}$ function, $\gamma$,  such that, for all $t\ge 0$, ${\max_i} \vert x_i(t)-x_i^d(t) \vert_2 \le  \beta ({\max_i}\sup_{-\tau_0\le s \le 0}\vert x_i(s)-x_i^d(s) \vert_2,t ) + \gamma ({\max_i}\Vert d_i(\cdot) \Vert_{\mathcal{L}_{\infty}})$, $\forall N$;
\item $\mathcal{L}_{\infty}$-scalable-Input-Output Stable ($\mathcal{L}_{\infty}$-sIOS) if there exists some class-$\mathcal{KL}$ function, {$\kappa$}, and class-$\mathcal{K}$ function, $\gamma$,  such that, for all $t\ge 0$, ${\max_i} \vert y_i(t)-y_i^d(t) \vert_2 \le  {\kappa} ({\max_i}\sup_{-\tau_0\le s \le 0}\vert x_i(s)-x_i^d(s) \vert_2,t ) + \gamma ({\max_i}\Vert d_i(\cdot) \Vert_{\mathcal{L}_{\infty}})$, $\forall N$,
\end{itemize}
where $x_i(t)$ is a solution of the system with $x_i(s)=\varphi_i(s),s \in [-{\tau_0},0],i=1,\ldots,N$ and $x_i^d(s)=x_i^d(0)$, $s \in [-\tau_0,0]$.
\end{Definition}
\begin{Remark}
We say that the {network system} is $\mathcal{L}_{\infty}$-sISS ($\mathcal{L}_{\infty}$-sIOS) if the above definition is fulfilled. The upper bounds in the definition give an estimate on the maximum {\em deviation} of state (output) from the desired configuration/solution. The functions {$\kappa$}, $\beta$ and $\gamma$ are not dependent on the number of agents in the network and \textcolor{black}{such invariance of the bounds w.r.t. $N$ is a key feature that differentiates scalability from the classic notions of ISS and IOS.} This guarantees that disturbances will not grow without bound as new agents are added, \textcolor{black}{thus supporting the possibility of adding new agents.}
\end{Remark}
\section{Technical results}\label{sec:results}
Our first result is a sufficient condition guaranteeing $\mathcal{L}_{\infty}$-sISS of (\ref{dynamics_1}) - (\ref{control_protocol}). Whenever it is clear from the context, we omit the dependency of the state variables on the time.
\begin{Proposition}\label{prop: ISFS}
Consider network  (\ref{dynamics_1}) - (\ref{control_protocol}) with $y_i(t) = x_i(t)$. Assume that, $\forall i=1,\ldots,N$ and $\forall t\ge 0$, the following conditions are satisfied for some $0<\underline{\sigma}<\bar{\sigma}\textcolor{black}{<+\infty}$:
\item[\bf{(i)}] 
\begin{equation*}
\begin{split}
& h_{ij}(x_i^d(t),x_j^d(t),t)=h_{ij}^{(\tau)}(x_i^d(t-\tau(t)),x_j^d(t-\tau(t)),t)\\
&=\underline{h}_{il}(x_i^d(t),x_l(t),t)=\underline{h}_{il}^{(\tau)}(x_i^d(t-\tau(t)),x_l(t-\tau(t)),t)=0
\end{split}
\end{equation*}
\item[\bf{(ii)}] 
\begin{equation*}
\begin{split}
&\mu_2\big( \partial_1{f_i(x_i,t)}+\sum_{l \in \sL_i} \partial_1 {\underline{h}_{il}}(x_i,x_l,t)+\sum_{j \in \sN_i} \partial_1 {h_{ij}}(x_i,x_j,t) \big)\\
&+\sum_{j\in\mathcal{N}_i}\norm{\partial_2 {h_{ij}}(x_i,x_j,t)}_2 \le -\bar \sigma,\ \ \forall x_i,x_j,x_l\in\R^n
\end{split}
\end{equation*}
\item[\bf{(iii)}] 
\begin{equation*}
\begin{split}
&\norm{\sum_{l \in \sL_i}  \partial_1{\underline{h}_{il}^{(\tau)}}(x_i,x_l,t) + \sum_{j \in \sN_i} \partial_1{h_{ij}^{(\tau)}}(x_i,x_j,t)}_2\\
&+\sum_{j \in \sN_i}\norm{\partial_2 {h_{ij}^{(\tau)}}(x_i,x_j,t)}_2 \le \underline \sigma, \ \ \forall x_i,x_j,x_l\in\R^n
\end{split}
\end{equation*}
Then, the system is $\sL_{\infty}$-sISS. In particular, we have $\forall t\ge0$:
\begin{equation}\label{eqn:upper_bound}
\begin{split}
\max_i\abs{x_i(t)-x^d_i(t)}_2 &\le \max_i\sup_{-\tau_0 \le s \le 0}\abs{x_i(s)-x_i^d(s)}_2e^{-\hat \lambda t}\\ 
&+\frac{\bar b}{\bar \sigma-\underline \sigma}\max_i\norm{d_i(\cdot)}_{\sL_\infty}, \forall N
\end{split}
\end{equation}
where ${0<}\hat \lambda=\inf_{t\ge 0}\{\lambda|\lambda(t) - \bar \sigma+\underline \sigma e^{\lambda(t)\tau(t)}=0\}$, $x_i(t)$ is a solution of the system with initial value $x_i(s)=\varphi_i(s)$, $s\in [-\tau_0,0]$, $i=1,\ldots,N$ {and  $x_i^d(s)=x_i^d(0)$, $s \in [-\tau_0,0]$}.
\end{Proposition}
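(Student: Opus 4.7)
My plan is to recast the statement as a contraction problem for the error $e_i(t) := x_i(t) - x_i^d(t)$ in a max-separable metric and then reduce everything to Lemma \ref{prop:halanay}. Condition (i) guarantees that every coupling in (\ref{control_protocol}) vanishes along the unperturbed desired trajectory, so subtracting $\dot{x}_i^d = f_i(x_i^d,t)$ from (\ref{dynamics_1}) and rewriting each difference such as $f_i(x_i,t)-f_i(x_i^d,t)$ and $h_{ij}(x_i,x_j,t)-h_{ij}(x_i^d,x_j^d,t)$ as the integral of its Jacobian along the segment between the desired and actual states produces a linear time-varying delayed system in $e_i$ driven by $b_i(x_i,t) d_i(t)$. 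Each leader state $x_l$ enters only as an exogenous parameter of the integrated Jacobians because $\underline{h}_{il}(x_i^d,x_l,t)\equiv 0$ forces no leader error to appear, which is why the same argument will apply uniformly in leaderless and leader-follower settings.

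Next, I endow $\R^{nN}$ with the max-separable metric $\abs{e}_G := \max_i \abs{e_i}_2$ (outer $\ell_\infty$, inner $\ell_2$) and apply Lemma \ref{lem:matrix norm} to the block Jacobians of the delay-free and delayed parts. For the delay-free part, the diagonal blocks are $\partial_1 f_i + \sum_{l \in \sL_i}\partial_1 \underline{h}_{il} + \sum_{j \in \sN_i}\partial_1 h_{ij}$ and the off-diagonal neighbour blocks are $\partial_2 h_{ij}$; Lemma \ref{lem:matrix norm}(i) reduces $\mu_G$ to $\mu_\infty$ of the $N\times N$ majorant, whose $i$-th row sum is exactly the left-hand side of condition (ii). By convexity of $\mu_2$ and sub-additivity of $\norm{\cdot}_2$ under integration, the bound $-\bar\sigma$ transfers from the pointwise Jacobians to their integrated counterparts. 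The same reasoning via Lemma \ref{lem:matrix norm}(ii) and condition (iii) yields $\norm{A^{(\tau)}(t)}_G \le \underline\sigma$ for the delayed block Jacobian. The forcing satisfies $\max_i\abs{b_i(x_i,t)d_i(t)}_2 \le \bar b\max_i\norm{d_i(\cdot)}_{\sL_\infty}$.

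Applying the Coppel-type matrix-measure inequality in this metric to the upper Dini derivative of $w(t):=\abs{e(t)}_G=\max_i\abs{e_i(t)}_2$, the three bounds above combine into
\begin{equation*}
D^+ w(t) \le -\bar\sigma\, w(t) + \underline\sigma \sup_{t-\tau(t)\le s \le t} w(s) + \bar b \max_i \norm{d_i(\cdot)}_{\sL_\infty},
\end{equation*}
using $\abs{e(t-\tau(t))}_G \le \sup_{t-\tau(t)\le s \le t} w(s)$. Since $\underline\sigma < \bar\sigma$, setting $a := -\bar\sigma$, $b := \underline\sigma$, $c := \bar b\max_i\norm{d_i(\cdot)}_{\sL_\infty}$ and $\sigma := \bar\sigma - \underline\sigma$ verifies the hypotheses of Lemma \ref{prop:halanay}, whose conclusion is exactly (\ref{eqn:upper_bound}) with the stated rate $\hat\lambda$. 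Since $\bar\sigma$, $\underline\sigma$, and $\bar b$ do not depend on $N$, Definition \ref{def:ISFS} holds and $\sL_\infty$-sISS follows.

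The main obstacle is justifying the envelope step: $w(t)$ is only absolutely continuous because the argmax index can switch with $t$, and $\abs{\cdot}_2$ is non-smooth at the origin, so $D^+w(t)$ must be controlled by $D^+\abs{e_{i^\ast(t)}(t)}_2$ at almost every $t$ through a standard measurable-selection / max-of-finitely-many-functions argument. Once this is in hand, the remainder is routine chain-rule and triangle-inequality bookkeeping.
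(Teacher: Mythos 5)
Your proposal is correct and follows essentially the same route as the paper's proof: error dynamics via integrated Jacobians, the mixed $\ell_\infty$-over-$\ell_2$ norm with Lemma \ref{lem:matrix norm} to bound $\mu_G(A(t))\le-\bar\sigma$ and $\norm{H(t)}_G\le\underline\sigma$, and then Lemma \ref{prop:halanay}. The only cosmetic difference is that the paper sidesteps your ``argmax switching'' concern by taking the Dini derivative of $\abs{z(t)}_G$ for the stacked vector directly and invoking the definition of the induced matrix measure, so no measurable-selection argument is needed.
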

In what follows, when we state the other scalability results, we omit that $x_i(s)=\varphi_i(s)$, $\forall s\in [-\tau_0,0]$ {and that $x_i^d(s)=x_i^d(0)$, $\forall s \in [-\tau_0,0]$ as this is clear from the context. Before presenting the proof of the above result we note the following.

\begin{Remark}
{\textcolor{black}{Intuitively, condition (i) implies that, at the desired solution,  $u_i(t)=0$. This rather common (see e.g. \cite{1333204,4200874,STUDLI2017157}) condition} guarantees that the desired solution is a solution of the unpertubed dynamics.} Conditions {(ii)} and (iii) give upper bounds on the matrix measure and matrix norm of the Jacobian of the controlled network system. These conditions imply that the Jacobian of the intrinsic dynamics and of the delay-free part of the protocol have a matrix measure that is {\em sufficiently negative} to balance the presence of the delays. \textcolor{black}{As we shall see, the conditions imply the existence of a norm in which the delay-free part of the dynamics has a negative matrix measure that is small enough to compensate the delays.}
\end{Remark}

%%%%%%%%%%%%%%%%%%%%%%%%%%%%%%%%%%%%%%%%%%%%%%%%%%%%%%%%%%%%%%%
%\SX{\red{the matrix measure includes the intrinsic dynamic part and the delay-free part}}
\begin{Remark}\label{rem:delays}
Interestingly, if the conditions of the proposition are satisfied, the scalability of the formation is guaranteed for any bounded delay. While scalability is guaranteed independently on the delay, the {\em convergence rate} $\hat\lambda$ depends on $\tau(t)$. Also, we do not require any assumption on the differentiability of the delays. In this sense, our results relax an assumption made to study stability in e.g. \cite{7486039} and related references. 
%,Liuxie,8673609,
\end{Remark}
\black{
\begin{Remark}
As in \cite{articleBesselink}, the definition of scalability used in this paper is independent on the topology of the interconnections between the agents. Also, in accordance to \cite{articleBesselink}, designing a given network so that each agent has the number of neighbours independent on the network size can be leveraged to satisfy conditions (ii) and (iii) of Proposition \ref{prop: ISFS}. Unfortunately, in general adding connections between agents can lead to a loss of the scalability property (see also Figure \ref{nonscalable} \textcolor{black}{and the related discussion in Section \ref{sec:robotics_example}}).
\end{Remark}}
We are now ready to give the proof for Proposition \ref{prop: ISFS}.
\begin{proof} 
We start with noting that, following condition (i), $x_i^d(t)$ satisfies {$\dot{x}_i^d(t)=f_i(x_i^d(t),t)$}. Hence:
\begin{align*}
&\dot{x}_i(t)-\dot{x}_i^d(t) \nonumber \\ 
=&{f_i(x_i(t),t)-f_i(x_i^d(t),t)} \nonumber\\
+&\sum_{l \in \sL_i}\black{\underline{h}_{il}}\big(x_i(t),x_l(t),t\big)-\sum_{l \in \sL_i}\black{\underline{h}_{il}}\big(x_i^d(t),x_l(t),t\big) \nonumber\\
+&\sum_{l \in \sL_i} \black{\underline{h}_{il}^{(\tau)}}\big(x_i(t-\tau(t)),x_l(t-\tau(t)),t\big) \nonumber\\ 
-&\sum_{l \in \sL_i} \black{\underline{h}_{il}^{(\tau)}}\big(x_i^d(t-\tau(t)),x_l(t-\tau(t)),t\big) \nonumber\\
+&\sum_{j \in \sN_i}\black{h_{ij}}\big(x_i(t),x_j(t),t\big)-\sum_{j\in \sN_i}\black{h_{ij}}\big(x_i^d(t),x_j^d(t),t\big) \nonumber\\
+&\sum_{j \in \sN_i} \black{h_{ij}^{(\tau)}}\big(x_i(t-\tau(t)),x_j(t-\tau(t)),t\big) \nonumber\\
-&\sum_{j\in \sN_i} \black{h_{ij}^{(\tau)}}\big(x_i^d(t-\tau(t)),x_j^d(t-\tau(t)),t\big)+b_i(x_i,t)d_i(t)
\end{align*}
Let $z(t)=[z_1^T(t),\ldots,z_N^T(t)]^T$, $z_i(t)=x_i(t)-x_i^d(t)$. Then, the dynamics of $z(t)$ can be written as (see e.g.  \cite{Desoer}):
\begin{align}\label{eqn:error_dyn}
\dot{z}(t)=A(t)z(t)+H(t)z(t-\tau(t))+B(x,t)d(t)
\end{align}
where $A(t)$ is a $nN\times nN$ matrix consisting of the $n\times n$ blocks defined, $\forall i,j=1,\ldots,N$, as follows: (i) $
A_{ii}(t):=\int_{0}^1 \partial_1{f_i(e_i(\eta),t)}d\eta+\sum_{l \in \sL_i}\int_{0}^1 \partial_1 \black{\underline{h}_{il}}(e_i(\eta),x_l,t)d\eta + \sum_{j \in \sN_i}\int_{0}^1 \partial_1 \black{h_{ij}}(e_i(\eta),e_j(\eta),t)d\eta
$; (ii) $A_{ij}(t):=\int_{0}^1 \partial_2 \black{h_{ij}}(e_i(\eta),e_j(\eta),t)d\eta$, where $e_i(\eta) := \eta x_i+(1-\eta)x_i^d$ and where the dependency of the state variables on time has been omitted. Also, in (\ref{eqn:error_dyn}), $B(x,t)$ is the $nN\times nN$ block-diagonal matrix having $b_i(x_i,t)$ on the main diagonal, $d(t)=[d_1^T(t),\ldots,d_N^T(t)]^T$ and $H(t)=(H_{ij})_{i,j=1}^N$ is the block matrix consisting of the $n\times n$ blocks:(i) $H_{ii}(t):=\sum_{l \in \sL_i} \int_{0}^1  \partial_1\black{\underline{h}_{il}^{(\tau)}}(e_i(\eta),x_l,t)d\eta + \sum_{j \in \sN_i} \int_{0}^1  \partial_1\black{h_{ij}^{(\tau)}}(e_i(\eta),e_j(\eta),t)d\eta$; (ii) $H_{ij}(t):=\int_{0}^1 \partial_2 \black{h_{ij}^{(\tau)}}(e_i(\eta),e_j(\eta),t)d\eta$, where again we omitted the explicit dependency of the state variables on time and $e_i(\eta) := \eta x_i+(1-\eta)x_i^d$. We now study the error dynamics in (\ref{eqn:error_dyn}). To this aim we make use of {Lemma \ref{prop:halanay}} and define $\abs{z(t)}_G:=\vert [\abs{z_1(t)}_2,\ldots,\abs{z_N(t)}_2] \vert_\infty$, which can be easily seen to be a vector norm. By taking the Dini derivative of $\abs{z(t)}_G$, from (\ref{eqn:error_dyn}) we get:
\begin{equation*}
\begin{split}
&D^+\abs{z(t)}_G := \limsup_{h\rightarrow 0^+}\frac{1}{h}(\abs{z(t+h)}_G-\abs{z(t)}_G)\\
= &\limsup_{h\rightarrow 0^+}\frac{1}{h}\bigg(\vert z(t)+hA(t)z(t)+hH(t)z(t-\tau(t))\\
&+hB(x,t)d(t)\vert_G-\abs{z(t)}_G \bigg)\\
\le & \mu_G\big(A(t)\big)\abs{z(t)}_G+\norm{H(t)}_G \sup_{t-\tau(t) \le s \le t}\abs{z(s)}_G\\
&+ \bar b \max_i\norm{d_i(\cdot)}_{\sL_\infty}
\end{split}
\end{equation*} 
which was obtained by means of the triangle inequality and by using the fact that, from the definition of $\abs{\cdot}_G$ and the boundedness of $b_i(x_i,t)$, $\norm{B(t,x)}_G\le \max_i\sup_{x_i,t}{\norm{b_i(x_i,t)}_2} \le \bar b$. In order to apply {Lemma \ref{prop:halanay}}, we need to find the upper bound of $\mu_G(A(t))$ and $\norm{H(t)}_G$. This can be computed via {Lemma \ref{lem:matrix norm}}. Indeed, from such a result it follows that $\mu_G(A(t)) \le \max_i\big\{\mu_2(A_{ii}(t))+\sum_{\textcolor{black}{j\in \sN_i}}\norm{A_{ij}(t)}_2\big\}$ and $\norm{H(t)}_G \le \max_i\big\{\sum_{j=1}^N\norm{H_{ij}(t)}_2\big\}$. Moreover, conditions { (ii)} and { (iii)} imply that $\max_i\big\{\mu_2(A_{ii}(t))+\sum_{\textcolor{black}{j\in \sN_i}}\norm{A_{ij}(t)}_2\big\}\le -\bar{\sigma}$ and $\max_i\big\{\sum_{j=1}^N\norm{H_{ij}(t)}_2\big\}\le \underline{\sigma}$, for some $0<\underline{\sigma}<\bar{\sigma}\textcolor{black}{<+\infty}$. Hence, we get 
\begin{equation*}
\begin{split}
&D^+\abs{z(t)}_G \le \\
&-\bar{\sigma}\abs{z(t)}_G +\underline{\sigma} \sup_{t-\tau(t)\le s\le t}\abs{z(s)}_G + \bar{b}\max_i\norm{d_i(\cdot)}_{\sL_\infty}
\end{split}
\end{equation*}
$\forall N$, which, by means of {Lemma \ref{prop:halanay}}, yields
$$\abs{z(t)}_G \le \sup_{-\tau_0 \le s \le 0}\abs{z(s)}_Ge^{-\hat \lambda t}+\frac{\bar b}{\bar \sigma -\underline \sigma}\max_i\norm{d_i(\cdot)}_{\sL_\infty}$$ $\forall N$, with $0<\hat \lambda=\inf_{t\ge 0}\{\lambda|\lambda(t) - \bar \sigma+\underline \sigma e^{\lambda(t)\tau(t)}=0\}$.
Since $z(t) = x_i(t) - x_i^d(t)$, this completes the proof.
\end{proof}
\textcolor{black}{To further highlight a key difference between Definition \ref{def:ISFS} and the classical notion of ISS, consider the special case of a string of agents arranged in a cascading configuration. In this case, it is well-known  that the cascade of ISS agents is also ISS.  However, this does not guarantee that the network is scalable. Indeed, \cite{7879221} gives a counter-example (see Remark $3$ therein) where perturbations grow without bound for a string of ISS systems when the number of agents grows.} With the next result we give a sufficient condition for $\sL_{\infty}$-sIOS.
\begin{Proposition}\label{prop:IOS}
Consider the closed loop network  (\ref{dynamics_1}) - (\ref{control_protocol}) and assume that the conditions $(i)-(iii)$ of {Proposition \ref{prop: ISFS}} are satisfied and that, additionally, the functions $g_i(\cdot)$ are Lipschitz with Lipschitz constant $k_i$. Then, (\ref{dynamics_1})-(\ref{control_protocol}) is also $\sL_{\infty}$-sIOS.
\end{Proposition}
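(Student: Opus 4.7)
The plan is to reduce Proposition \ref{prop:IOS} to Proposition \ref{prop: ISFS} by exploiting the Lipschitz property of the output maps pointwise and then passing to the maximum over $i$. Since the hypotheses (i)--(iii) of Proposition \ref{prop: ISFS} are assumed to hold, I can directly invoke the upper bound (\ref{eqn:upper_bound}) on the state deviation $\max_i \abs{x_i(t)-x_i^d(t)}_2$, which is already class-$\mathcal{KL}$ in the initial state deviation plus a class-$\mathcal{K}$ function of $\max_i\norm{d_i(\cdot)}_{\sL_\infty}$, with bounding functions independent of $N$.

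Next, using $y_i(t)-y_i^d(t)=g_i(x_i(t))-g_i(x_i^d(t))$ and the Lipschitz assumption with constant $k_i$, I obtain, for each $i$, $\abs{y_i(t)-y_i^d(t)}_2 \le k_i\abs{x_i(t)-x_i^d(t)}_2$. Taking the maximum over $i$ on both sides and setting $\bar k := \sup_i k_i$ (which must be finite and independent of $N$ for the bound to remain scalable; this is the mild regularity requirement implicit in the statement), I get
\begin{equation*}
\max_i\abs{y_i(t)-y_i^d(t)}_2 \le \bar k \max_i\abs{x_i(t)-x_i^d(t)}_2.
\end{equation*}
Combining this with (\ref{eqn:upper_bound}) yields
\begin{equation*}
\begin{split}
\max_i\abs{y_i(t)-y_i^d(t)}_2 &\le \bar k\, \max_i\sup_{-\tau_0\le s \le 0}\abs{x_i(s)-x_i^d(s)}_2 e^{-\hat\lambda t}\\
&\quad + \frac{\bar k\, \bar b}{\bar\sigma-\underline\sigma}\max_i\norm{d_i(\cdot)}_{\sL_\infty}, \ \forall N.
\end{split}
\end{equation*}

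Finally, defining $\kappa(r,t):=\bar k\, r\, e^{-\hat\lambda t}$, which is class-$\mathcal{KL}$ since $\hat\lambda>0$ by Proposition \ref{prop: ISFS}, and $\gamma(r):=\frac{\bar k\, \bar b}{\bar\sigma-\underline\sigma}\, r$, which is class-$\mathcal{K}$, exhibits the pair of comparison functions required by Definition \ref{def:ISFS}. Crucially, neither $\kappa$ nor $\gamma$ depends on $N$, because $\hat\lambda,\bar\sigma,\underline\sigma,\bar b$ are $N$-independent by the hypotheses of Proposition \ref{prop: ISFS} and $\bar k$ is $N$-independent by the uniform Lipschitz assumption. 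This establishes $\sL_\infty$-sIOS. The only step that requires some care is the passage from the pointwise Lipschitz estimate to the $N$-uniform bound, i.e.\ ensuring that $\bar k=\sup_i k_i<+\infty$ uniformly in the network size; without such uniformity, the map from state-scalability to output-scalability could be destroyed as $N$ grows, even though the ISS-like bound on the state remains valid.
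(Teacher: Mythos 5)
Your proof is correct and follows essentially the same route as the paper's: invoke the state bound (\ref{eqn:upper_bound}) from Proposition \ref{prop: ISFS}, apply the Lipschitz estimate $\abs{y_i-y_i^d}_2\le k_i\abs{x_i-x_i^d}_2$ with $k:=\max_i k_i$, and read off the required comparison functions. Your explicit remark that $\sup_i k_i$ must be finite uniformly in $N$ is a sensible clarification of a point the paper leaves implicit in writing $k:=\max_i k_i$.
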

\begin{proof} The fulfilment of $(i)-(iii)$ of {Proposition \ref{prop: ISFS}} implies the upper bound in (\ref{eqn:upper_bound}). Since the output functions are Lipschitz, we also have that $\abs{y_i-y_i^d}_2:=\abs{g_i(x_i)-g_i(x_i^d)}_2\le k_i\abs{x_i-x_i^d}_2\le k\sup_i\abs{x_i-x_i^d}_2$, where $k:=\max_ik_i$. This, together with (\ref{eqn:upper_bound}) immediately implies the result.
\end{proof}

\section{Using the results to design scalable network systems}\label{sec:examples}
We illustrate how our approach can be effectively used to design scalable networks. The first application we consider is concerned with the design of distributed  protocols for a robotic formation. With the second application we focus on ensuring that certain recurrent neural networks \textcolor{black}{(RNNs)} are scalable.

\subsection{Formation scalability}\label{sec:robotics}
We now consider the problem of designing a control protocol guaranteeing  scalability of the formation for a network of $N$ mobile robots, while tracking a time-varying reference provided by a virtual leader. Each robot is modeled via a non-holonomic unicycle and, in particular, we adapt the popular model from \cite{lawton} by embedding external disturbances:
\begin{align}\label{application}
\begin{split}
    \dot{p}^x_i&=v_i\cos{\theta}_i\\
    \dot{p}^y_i&=v_i\sin{\theta}_i\\
    \dot{v}_i&=\frac{F_i+d^f_i}{m_i}\\
    \dot{\theta}_i&=\omega_i\\
    \dot{\omega}_i&=\frac{Q_i+d^q_i  }{I_i}
\end{split}
\end{align}
%%%%%%%%%%%%%%%%%%%%%%%%%%%%%%%%%%%%%%%%%%%%%%%%%%%
%%\SX{\red{replace the $d^{\tau}$ with $d^q$, $\tau$ with $Q$.}}
In the  model $p_i(t):=[p^x_i(t),p^y_i(t)]^T$ is the inertial position, $v_i(t)$ the linear speed, $\theta_i(t)$ the heading angle, $\omega_i(t)$ the angular velocity, $m_i$ the mass, $F_i(t)$ the applied force input, $Q_i(t)$ the applied torque input, $I_i(t)$ the moment of inertia, $d_i^f(t)$ the external force disturbance and $d_i^q(t)$ the external torque disturbance. In (\ref{application}) the disturbance $d^f_i(t)$ models uncertainties due to e.g. unmodeled friction forces, while $d^q_i(t)$ models external disturbances due to e.g. wind. We aim at controlling the {\em hand} position of the robots and denote by $\eta_i(t)$ the hand position of the $i$-th robot. As shown in the Appendix, the dynamics (\ref{application}) can be feedback linearised, yielding 
\begin{align}
\dot{\chi}_i &= A_i\chi_i + \nu_i(t) + b_i (t)d_i(t) \label{hpmodel_linearized:1}\\
\eta_i &=C_i\chi_i   \label{hpmodel_linearized:2}
\end{align} 
with {control  $\nu_i(t):=\left[\begin{matrix} 0_{2\times1} \\ \bar \nu_i(t)\end{matrix}\right]$}, $\chi_i(t):=[\chi_{i,1},\chi_{i,2},\chi_{i,3},\chi_{i,4}]^T=[p_i^x+l_i\cos\theta_i,p_i^y+l_i\sin\theta_i,v_i\cos\theta_i-l_iw_i\sin\theta_i,v_i\sin\theta_i+l_iw_i\cos\theta_i]^T$, $A_i:=\left[\begin{matrix} 0_{2\times2} & I_2 \\ 0_{2\times2} & 0_{2\times2} \end{matrix}\right]$, $b_i(t):=\left[\begin{matrix} 0_{2\times2} & 0_{2\times2}\\0_{2\times2} & \bar b_{i}(t) \end{matrix}\right]$, $d_i(t):=[0_{1\times2},\bar d_{i}(t)^T ]^T$, $C_i=\left[\begin{matrix} I_2 & 0_{2\times 2} \end{matrix}\right]$ (see the Appendix for the definitions).

In what follows we design the control $\nu_i(t)$ so that the above network system {is $\mathcal{L}_{\infty}$-sIOS and hence it}  has a scalable formation. We consider the popular class of protocols considered in e.g. \cite{weixun} where: (i) communications between robots are affected by delay; (ii) robots have access to a reference trajectory (i.e. hand position and speed) provided by a {\em virtual} leader, $\chi_l(t)$. In particular, $\eta_l(t):=[\chi_{l,1},\chi_{l,2}]^T$ denotes the hand position of the leader at time $t$ and $v_l(t):=[\chi_{l,3},\chi_{l,4}]^T$ is the corresponding smooth speed signal.}
Within the formation, the hand position of the $i$-th robot needs to keep some desired offset from the neighbours and from $\eta_l(t)$ while, at the same time, tracking the acceleration and speed provided by the virtual leader. That is, the desired solution for the $i$-th robot within the formation when there are no perturbations/delays, i.e. $\chi_i^d(t):= [\chi_{i,1}^d,\chi_{i,2}^d,\chi_{i,3}^d,\chi_{i,4}^d]^T$ is such that: (i) the desired hand position of the robot $\eta_i^d(t):=[\chi_{i,1}^d,\chi_{i,2}^d]^T$ keeps the desired offsets; (ii) the corresponding speed is $v_i^d(t)=[\chi_{l,3},\chi_{l,4}]^T$; (iii) satisfies
\begin{equation}\label{eqn:formation_desired_sol}
\dot\chi_i^d = \left[\begin{matrix} 0_{2\times2} & I_2 \\ 0_{2\times2} & 0_{2\times2} \end{matrix}\right]\chi_i^d + \left[\begin{matrix} 0_{2\times 1} \\ \dot v_l(t) \end{matrix}\right]
\end{equation}
\subsubsection*{Protocol design}
we consider control protocols of the form
\begin{align}\label{controlnu}
\bar\nu_i(t)&={\dot{v}_l(t)}+\sum_{j \in \sN_i} \black{\bar{h}_{ij}^{(\tau)}}(\chi_i(t-\tau(t)),\chi_j(t-\tau(t)),t\big)\nonumber \\
&+\bar{\underline{h}}_{il}\big(\chi_i(t),\chi_l(t),t\big)
\end{align}
where the coupling functions $\textcolor{black}{\bar{h}}_{ij}^{(\tau)}:\R^4\times\R^4\times\R_+\rightarrow\textcolor{black}{\R^2}$ and $\textcolor{black}{\bar{\underline{h}}}_{il}:\R^4\times\R^4\times\R_+\rightarrow\textcolor{black}{\R^2}$ can be nonlinear and are smooth and where all the agents are affected by the same delay {as in \cite{7486039}}. In what follows, we let: $h_{ij}^{(\tau)}\left(\chi_i(t-\tau(t)),\chi_j(t-\tau(t)),t\right):= \left[ 0_{1\times 2}, \bar{h}_{ij}^{(\tau)}\left(\chi_i(t-\tau(t)),\chi_j(t-\tau(t)),t\right)^T\right]^T$, and $\underline{h}_{il}\left(\chi_i(t),\chi_l(t),t\right):= \left[ 0_{1\times 2}, \bar{\underline{h}}_{il}\left(\chi_i(t),{\chi_l(t)},t\right)^T\right]^T$. Then, with the following results we establish a sufficient condition for $\mathcal{L}_{\infty}$-sIOS  of the closed-loop dynamics (\ref{hpmodel_linearized:1}) - (\ref{controlnu}). The result are stated in terms of the following matrix 
\begin{equation}\label{eqn:T_matrix}
T_i=\left[\begin{matrix}
I_2 & \alpha_iI_2\\ 0_{2\times 2} & I_2
\end{matrix}\right], \ \ \ \alpha_i >0
\end{equation}
\begin{Proposition}\label{prop: robotic}
Consider the network of mobile agents (\ref{hpmodel_linearized:1}) - (\ref{hpmodel_linearized:2}) controlled by (\ref{controlnu}). Assume that the coupling functions \black{${h_{ij}^{(\tau)}}$, $\underline{h}_{il}$} satisfy, $\forall i=1,\ldots,N$ and $t\ge 0$, the following conditions for some $0<\underline{\sigma}<\bar{\sigma}\textcolor{black}{<+\infty}$ and some vector $[\alpha_1,\ldots,\alpha_N]$ of non-negative constants
\begin{description}
\item[C1] $h_{ij}^{(\tau)}(\chi_i^d(t-\tau(t)),\chi_j^d(t-\tau(t)),t)=\underline{h}_{il}(\chi_i^d(t),\chi_l(t),t)=0$;
\item[C2] $\mu_2\big(\textcolor{black}{T_iA_iT_i^{-1}}+ T_i \partial_1 \underline{h}_{il}(\chi_i,\chi_l,t)T_i^{-1}\big) \le -\bar \sigma, \forall \chi_i,\chi_l\in\R^4$;
\item[C3] $\norm{\sum_{j\in \sN_i} T_i\partial_1 \black{h_{ij}^{(\tau)}}(\chi_i,\chi_j,t)T_i^{-1}}_2+\sum_{j \in \sN_i}\norm{ T_i \partial_2 \black{h_{ij}^{(\tau)}}(\chi_i,\chi_j,t)T_j^{-1}}_2 \le \underline \sigma, \forall \chi_i,\chi_j\in\R^4$.
\end{description}
Then, the network is $\mathcal{L}_{\infty}$-sIOS and in particular, $\forall t \ge 0$:
\begin{align*}
\max_i\abs{\eta_i(t)-\eta_i^d(t)}_2 & \le  K{\max_i}\sup_{-\tau_0 \le s \le 0}\abs{\chi_i(s)-\chi_i^d(s)}_2e^{-\hat \lambda t}\\ &+K\frac{{b_{\max}}}{\bar \sigma-\underline \sigma}{\max_i}\norm{d_i(\cdot)}_{\sL_\infty}, \forall N
\end{align*}
where $K:=\textcolor{black}{\frac{\max_i\{\sigma_{\max}(T_i)\}}{\min_i\{\sigma_{\min}(T_i)\}}}$, $b_{\max}:=\sup_{i,t}\norm{b_i(t)}_2$ and $0<\hat \lambda=\inf_{t\ge 0}\{\lambda|\lambda(t) - \bar \sigma+\underline \sigma e^{\lambda(t)\tau(t)}=0\}$.
\end{Proposition}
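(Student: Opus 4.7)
The plan is to reduce Proposition 4 to the framework of Proposition 1 by (i) writing the dynamics in error coordinates and (ii) applying the block-wise similarity transformation $T_i$ to turn the nilpotent drift $A_i$ (for which $\mu_2(A_i)=0$) into something whose matrix measure can be made strictly negative by the leader coupling, with the negativity inherited from condition C2.

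First I would subtract $\dot\chi_i^d$, as specified in (\ref{eqn:formation_desired_sol}), from the closed-loop dynamics (\ref{hpmodel_linearized:1}), (\ref{controlnu}). Since $\bar\nu_i$ contains the feedforward $\dot v_l(t)$, this cancels the drift term $[0_{2\times 1};\dot v_l(t)]^T$ appearing in both equations; only the coupling contributions and the disturbance survive. Setting $z_i:=\chi_i-\chi_i^d$, and using C1 together with the integral mean-value representation already employed in the proof of Proposition 1, we obtain a linear time-varying delay system $\dot z = A(t) z + H(t) z(t-\tau(t)) + B(t)d(t)$ whose diagonal blocks are $A_{ii}(t) = A_i + \int_0^1 \partial_1\underline{h}_{il}(e_i(\eta),\chi_l,t)\,d\eta$ (delay-free) and $H_{ii}(t), H_{ij}(t)$ defined from $\partial_1 h_{ij}^{(\tau)}$ and $\partial_2 h_{ij}^{(\tau)}$ respectively.

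Next I would perform the coordinate change $\tilde z_i := T_i z_i$, so that the transformed diagonal block becomes $T_i A_{ii}(t) T_i^{-1}$, the transformed delayed diagonal block becomes $T_i H_{ii}(t) T_i^{-1}$, and the off-diagonal delayed blocks become $T_i H_{ij}(t) T_j^{-1}$. At this point I would mimic the argument in the proof of Proposition 1 verbatim: take the Dini derivative of $|\tilde z(t)|_G := \max_i |\tilde z_i(t)|_2$, apply Lemma \ref{lem:matrix norm} with $G_i$ chosen so that $|\tilde z_i|_{G_i}=|\tilde z_i|_2$, and invoke conditions C2 and C3 to bound $\mu_G$ of the transformed drift by $-\bar\sigma$ and $\|\cdot\|_G$ of the transformed delayed matrix by $\underline\sigma$. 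The disturbance term is handled via $|T_i b_i(t) d_i(t)|_2 \le \sigma_{\max}(T_i)\, b_{\max} \,\|d_i(\cdot)\|_{\mathcal{L}_\infty}$, which produces a forcing of magnitude at most $(\max_i\sigma_{\max}(T_i))\, b_{\max} \max_i\|d_i\|_{\mathcal{L}_\infty}$. An application of Lemma \ref{prop:halanay} then yields an exponential-plus-steady-state bound on $\max_i |\tilde z_i(t)|_2$.

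Finally I would translate the bound back to the original coordinates using $\sigma_{\min}(T_i)\,|z_i|_2 \le |\tilde z_i|_2 \le \sigma_{\max}(T_i)\,|z_i|_2$, which introduces precisely the factor $K=\max_i\sigma_{\max}(T_i)/\min_i\sigma_{\min}(T_i)$ both on the initial-condition and on the disturbance term. The output inequality follows immediately because $\eta_i-\eta_i^d = C_i z_i$ with $\|C_i\|_2 = 1$, so $|\eta_i-\eta_i^d|_2 \le |z_i|_2$. The main obstacle I anticipate is purely bookkeeping: checking that the off-diagonal $T_i(\cdot)T_j^{-1}$ structure in the transformed delayed matrix matches exactly the quantity bounded in C3, and verifying that C2 uses $T_i(\cdot)T_i^{-1}$ on the diagonal as stated. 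Once this correspondence is in place, the result is essentially Proposition 1 applied in the weighted norm induced by the $T_i$'s, plus a final Lipschitz/projection step for the output map.
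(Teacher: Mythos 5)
Your proposal is correct and follows essentially the same route as the paper: apply the block-diagonal change of coordinates $\tilde\chi_i = T_i\chi_i$, observe that C1--C3 become conditions (i)--(iii) of Proposition \ref{prop: ISFS} in the new coordinates, invoke the Halanay-type bound, and translate back via $\sigma_{\min}(T_i)$, $\sigma_{\max}(T_i)$ to obtain the factor $K$, finishing with $\abs{\eta_i-\eta_i^d}_2\le\abs{\chi_i-\chi_i^d}_2$. The only cosmetic difference is that the paper cites Propositions \ref{prop: ISFS}--\ref{prop:IOS} as black boxes after the transformation, whereas you re-run their proof inline; one small slip in your write-up is the claim that ``only the coupling contributions and the disturbance survive'' after subtracting the feedforward --- the linear drift $A_i z_i$ also survives (as your own expression for $A_{ii}(t)$ correctly shows, and as C2 requires).
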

\begin{proof} We prove the result via { Proposition \ref{prop:IOS}} and again we omit the explicit dependence of the state variables on time as this is clear from the context. Clearly, the output function is Lipschitz and therefore in order to apply the result we only need to show that the conditions of {Proposition \ref{prop: ISFS}} are satisfied. First, we let $\chi^d(t) :=[{\chi_1^d}^T,\ldots,{\chi_N^d}^T]^T$ be the desired solution of the network, corresponding to the desired formation. We then note that, by means of {C1}, $\chi_i^d(t)$ is a solution of (\ref{eqn:formation_desired_sol}). Also, the fulfilment of {C1}  implies the fulfilment of condition {(i)} in {Proposition \ref{prop: ISFS}}. In order to continue with the proof, for the dynamics (\ref{hpmodel_linearized:1}) - (\ref{hpmodel_linearized:2}) we consider, $\forall i=1,\ldots,N$, the coordinate transformation $\widetilde{\chi}_i(t):=T_i\chi_i(t)$ with $T_i$ defined in (\ref{eqn:T_matrix}). {In particular, by writing (\ref{hpmodel_linearized:1}) - (\ref{hpmodel_linearized:2}) in these new coordinates, one can note that  {C2} and {C3} are equivalent to (ii) and (iii) of Proposition \ref{prop: ISFS}.} Therefore we have:
\begin{align*}
\max_i\abs{\widetilde{\chi}_i(t)-\widetilde{\chi}_i^d(t)}_2  &\le {\max_i}\sup_{-\tau_0 \le s \le 0}\abs{\widetilde{\chi}_i(s)-\widetilde{\chi}_i^d(s)}_2e^{-\hat \lambda t}\\
& +\max_i\{\sigma_{\max}(T_i)\}\frac{{b_{\max}}}{\bar \sigma-\underline \sigma}{\max_i}\norm{d_i(\cdot)}_{\sL_\infty}
\end{align*}
$\forall N$, with $\widetilde{\chi}^d_{i}(t)$ being the desired solution of robot $i$ in the new  coordinates. To obtain the upper bound we  used the fact that:  (i) $\norm{TB(t)}_G \le \norm{T}_G\norm{B(t)}_G \le \max_i\{\sigma_{\max}(T_i)\}b_{\max}$ ({$T$ and $B(t)$ are the $Nn\times Nn$ block diagonal matrices having on their main diagonal the blocks $T_i$'s and $b_i(t)$'s, respectively}); (ii)
${\max_{i}} \vert \widetilde{\chi}_{i}(t)-\widetilde{\chi}_{i}^d(t) \vert_2 \ge  \underline{\lambda} {\max_{i}} \vert \chi_{i}(t)-\chi_{i}^d(t) \vert_2$,  $\underline{\lambda} := \min_i\{\sigma_{\min}(T_i)\}$; (iii) ${\max_{i}}\sup_{-\tau_0 \le s \le 0}\abs{\widetilde{\chi}_i(s)-\widetilde{\chi}_i^d(s)}_2 \le  \bar{\lambda} {\max_{i}}\sup_{-\tau_0 \le s \le 0}\abs{\chi_i(s)-\chi_i^d(s)}_2$, $\bar{\lambda} := \max_{i}\{\sigma_{\max}(T_i)\}$. 
{Therefore, we get 
\begin{align*}
{\max_{i}}\abs{\chi_i(t)-\chi_i^d(t)}_2 &  \le{\frac{\bar{\lambda}}{\underline{\lambda}}}{\max_{i}}\sup_{-\tau_0 \le s \le 0}\abs{\chi_i(s)-\chi_i^d(s)}_2e^{-\hat \lambda t}\\&+{\frac{\bar{\lambda}}{\underline{\lambda}}}\frac{{b_{\max}}}{\bar \sigma-\underline \sigma}{\max_{i}}\norm{d_i(\cdot)}_{\sL_\infty}, \forall N
\end{align*}
proving the result as \black{$\abs{\eta_i(t)-\eta_i^d(t)}_2 \le \abs{\chi_i(t)-\chi_i^d(t)}_2$}.}
\end{proof}

\subsection{Scalability in Cohen-Grossberg recurrent neural networks}\label{Sec:neural_networks}
We now consider the problem of designing scalable neural networks. In particular, we focus on Cohen-Grossberg neural network, which are widely used for e.g. {pattern recognition, associative memories \cite{4666771}} and have Hopfield neural networks as a special case. The model we consider is
\begin{align}\label{cohen}
\dot{x}_i&=p_i(x_i(t))(-c_i(x_i(t))+\sum_{j=1}^{N}a_{ij}g_j(x_j(t))\nonumber \\
&+\sum_{j=1}^Nb_{ij}{g_j^{(\tau)}}(x_j(t-\tau(t)))+{u_i}+d_i(t))
\end{align}
%\SX{\red{all n changed to N}}
$i=1,\ldots,N$, where $x_i(t) \in \R$ is the state of the $i$-th neuron, $\tau(t) \le \tau_0$ is the time varying transmission delay associated to information transmission, $p_i(\cdot)$ is the amplification function which is assumed to be positive and bounded with $\underline{p}\le p_i(x_i)\le\bar{p}, \forall i, x_i$, {$g_i(\cdot)$ ($g_i^{(\tau)}(\cdot)$) is the activation function of the $i$-th neuron for delay free (delayed) connection}, \textcolor{black}{$a_{ij}\in\R$,  $b_{ij}\in\R$} are the corresponding neuron connection weights, $u_i$ is the (possible) constant input to the neuron and $d_i(t)$ is the exogenous disturbance. The disturbance can model environmental perturbations, adversarial attacks and data biases. In what follows, {$x^*:=[x_1^\ast,\ldots,x_N^\ast]^T$} is the desired network equilibrium point for a given input $u=[u_1,\ldots,u_N]^T$.
\begin{Proposition}\label{prop: cohen}
Consider the recurrent neural network (\ref{cohen}). Assume that there exist some $\bar{\sigma},\underline{\sigma}>0$ such that, $\forall i=1,\ldots,N$ and $\forall t\ge 0$:
\begin{description}
\item[C1] ${-c_i(x_i^*)+\sum_{j=1}^{N}a_{ij}g_j(x_j^*)+\sum_{j=1}^Nb_{ij}{g_j^{(\tau)}}(x_j^*)+u_i=0}$;
\item[C2] $-\partial_1c_i(x_i)+a_{ii}\partial_1g_i(x_i)+\sum_{j\ne i}\abs{a_{ij}\partial_1 g_j(x_j)}\le -\bar{\sigma}$, $\forall x_i, x_j\in\R$;
\item[C3] $\sum_{j=1}^N \abs{b_{ij}\partial_1{g_j^{(\tau)}(x_j)}} \le \underline{\sigma}$, $\forall x_j\in\R$;
\item[C4] $\bar{p}\underline{\sigma}<\underline{p}\bar{\sigma}$.
\end{description}
Then, the network is $\sL_{\infty}$-sISS. In particular, $\forall t\ge0$:
\begin{align*}
{\max_{i}}\abs{x_i(t)-x_i^*} &\le {\max_{i}}\sup_{-\tau_0 \le s \le 0}\abs{x_i(s)-x_i^*}e^{-\hat \lambda t}\\
&+{\frac{\bar{p}}{\underline{p}\bar \sigma-\bar{p}\underline \sigma}}{\max_{i}}\norm{d_i(\cdot)}_{\sL_\infty}, \forall N
\end{align*}
where $0<\hat \lambda=\inf_{t\ge 0}\{\lambda|\lambda(t) - \underline{p}\bar \sigma+\bar{p}\underline \sigma e^{\lambda(t)\tau(t)}=0\}$.
\end{Proposition}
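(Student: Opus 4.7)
The plan is to reduce Proposition \ref{prop: cohen} to an application of the same machinery used for Proposition \ref{prop: ISFS}: Lemma \ref{lem:matrix norm} specialised to the scalar-state $\mu_\infty$ measure, combined with the Halanay inequality of Lemma \ref{prop:halanay}. First, I would introduce the error variable $z_i(t) := x_i(t) - x_i^*$ and, using C1, rewrite the dynamics (\ref{cohen}) as a perturbed nonautonomous system around $x^*$. Writing each difference $-c_i(x_i)+c_i(x_i^*)$, $g_j(x_j)-g_j(x_j^*)$, and $g_j^{(\tau)}(x_j(t-\tau(t)))-g_j^{(\tau)}(x_j^*)$ in the integral-averaged form already employed for (\ref{eqn:error_dyn}), the error dynamics takes the structure $\dot z_i = p_i(x_i)\bigl[\tilde A_{ii}(t)z_i + \sum_{j\ne i}\tilde A_{ij}(t)z_j + \sum_{j}\tilde H_{ij}(t)z_j(t-\tau(t)) + d_i(t)\bigr]$, where $\tilde A_{ii}$ collects the averaged $-\partial_1 c_i + a_{ii}\partial_1 g_i$ terms, $\tilde A_{ij}$ the averaged $a_{ij}\partial_1 g_j$ terms, and $\tilde H_{ij}$ the averaged $b_{ij}\partial_1 g_j^{(\tau)}$ terms.

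Next, I would take the Dini derivative of $|z(t)|_\infty := \max_i |z_i(t)|$. Since the agents are scalar, Lemma \ref{lem:matrix norm} specialises to the familiar $\mu_\infty$ bound $\max_i\bigl[\tilde A_{ii}+\sum_{j\ne i}|\tilde A_{ij}|\bigr]$ for the delay-free part and to $\max_i\sum_j |\tilde H_{ij}|$ for the delayed part. The key observation is that $p_i(x_i)>0$ has a definite sign, so the amplification factor attaches itself with the correct monotonicity: the stabilising term is lower-bounded by $\underline p$ and the destabilising and forcing terms are upper-bounded by $\bar p$. Substituting the pointwise bounds of C2 and C3 (already absorbing the integral averaging) then yields
\begin{equation*}
D^+|z(t)|_\infty \le -\underline p\,\bar\sigma\,|z(t)|_\infty + \bar p\,\underline\sigma\sup_{t-\tau(t)\le s\le t}|z(s)|_\infty + \bar p\max_i\|d_i(\cdot)\|_{\sL_\infty}.
\end{equation*}

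Finally, condition C4 is exactly what makes Lemma \ref{prop:halanay} applicable, since it ensures $a+b = -\underline p\,\bar\sigma + \bar p\,\underline\sigma < 0$. Applying Lemma \ref{prop:halanay} with $a=-\underline p\,\bar\sigma$, $b=\bar p\,\underline\sigma$, $c=\bar p\max_i\|d_i(\cdot)\|_{\sL_\infty}$ and $\sigma=\underline p\,\bar\sigma - \bar p\,\underline\sigma$ delivers precisely the announced bound, with $\hat\lambda$ as stated. The main obstacle I anticipate is the careful bookkeeping around the amplification function: one must track that $p_i(x_i)\cdot(-\bar\sigma)|z_i|\le -\underline p\,\bar\sigma|z_i|$ (where positivity of $\bar\sigma$ flips the inequality correctly) while simultaneously $p_i(x_i)\cdot\underline\sigma\le \bar p\,\underline\sigma$ and $p_i(x_i)|d_i|\le \bar p|d_i|$, so that the two different envelopes of $p_i$ appear in the right coefficients. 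Beyond that, the proof is a direct transcription of the scheme used for Proposition \ref{prop: ISFS}, with the scalar $\mu_\infty$ structure replacing the general $\mu_G$ one.
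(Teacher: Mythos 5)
Your proposal is correct and follows essentially the same route as the paper: form the error $z=x-x^*$ via C1, write the error dynamics in integral-averaged Jacobian form as $\dot z=P(x)\big(A(t)z+H(t)z(t-\tau)+d\big)$, bound the Dini derivative of $\abs{z}_\infty$ using $\mu_\infty(P(x)A(t))\le-\underline{p}\bar\sigma$ and $\norm{P(x)H(t)}_\infty\le\bar p\,\underline\sigma$, and close with the Halanay lemma, which C4 makes applicable. The one point you flag as a potential obstacle --- attaching $\underline p$ to the stabilising term and $\bar p$ to the delayed and forcing terms --- is handled in the paper exactly as you describe, via the sign-definiteness of $p_i$ and the explicit $\mu_\infty$ computation rather than a formal appeal to Lemma \ref{lem:matrix norm}, a purely presentational difference.
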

\proof We start with noticing that condition {C1}, together with the fact that the $p_i(\cdot)$'s are {bounded}, implies that {$\dot{x}_i^*(t)=0=p_i(x_i^\ast(t))(-c_i(x_i^*(t))+\sum_{j=1}^{N}a_{ij}g_j(x_j^*(t))+\sum_{j=1}^Nb_{ij}{g_j^{(\tau)}}(x_j^*(t-\tau(t)))+u_i) = p_i(x_i(t))(-c_i(x_i^*)+\sum_{j=1}^{N}a_{ij}g_j(x_j^*)+\sum_{j=1}^Nb_{ij}{g_j^{(\tau)}}(x_j^*)+u_i)$}, with $x_i(t)$ being a solution of (\ref{cohen}). Hence: 
\begin{align*}
\dot{x}_i(t)-\dot{x}^*_i(t) & = p_i(x_i(t))\Big(-c_i(x_i(t)) + c_i(x_i^*) \\
&+ \sum_{j=1}^{N}a_{ij}g_j(x_j(t))-\sum_{j=1}^{N}a_{ij}g_j(x_j^*)\\
& +\sum_{j=1}^Nb_{ij}{g_j^{(\tau)}}(x_j(t-\tau(t)))-\sum_{j=1}^Nb_{ij}{g_j^{(\tau)}}(x_j^*)+d_i(t)\Big)
\end{align*} 
Again, we let $z(t)=[z_i^T(t),\ldots,z_N^T(t)]^T$, $z_i(t)=x_i(t)-x_i^*$. Then, the dynamics for $z(t)$ can be written as 
$$\dot{z}(t)=P(x)\big(A(t)z(t)+H(t)z(t-\tau)+d(t)\big)$$ 
where $P(x)$ is the diagonal matrix having $p_i(x_i(t))$, $i=1,\ldots,N$ on its main diagonal and where $A(t)$ is a $N\times N$ matrix having entries defined, $\forall i,j=1,\ldots,N$, as: 
\begin{description}
\item[•] $A_{ii}(t):=\int_0^1 -\partial_1c_i(\eta x_i+ (1-\eta)x_i^*)d\eta+a_{ii}\int_0^1 \partial_1g_i(\eta x_i +(1-\eta) x_i^*)d\eta$;
\item[•] $A_{ij}(t):=a_{ij}\int_0^1 \partial_1 g_j(\eta x_j+(1-\eta)x_j^*)d\eta$.
\end{description}
Also, $d(t)=[d_1^T(t),\ldots,d_N^T(t)]^T$ and $H(t)=(H_{ij})_{i,j=1}^N$ is the $N\times N$ matrix with the elements: (i) $H_{ii}(t):=b_{ii}\int_0^1 \partial_1{g_i^{(\tau)}}(\eta x_i + (1-\eta)x_i^*)d\eta$; (ii) $H_{ij}(t):=b_{ij}\int_0^1 \partial_1{g_j^{(\tau)}}(\eta x_j + (1-\eta)x_j^*)d\eta$. We then consider the Dini derivative of $\abs{z(t)}_\infty$ and this yields
\begin{equation}\label{dini}
\begin{split}
&D^+\abs{z(t)}_\infty := \limsup_{h\rightarrow 0^+}\frac{1}{h}(\abs{z(t+h)}_\infty-\abs{z(t)}_\infty)\\
=& \limsup_{h\rightarrow 0^+}\frac{1}{h}\big(\vert z(t)+hP(x)A(t)z(t)+hP(x)H(t)z(t-\tau(t))\\ &+hP(x)d(t)\vert_\infty-\abs{z(t)}_\infty\big)\\
\le &\limsup_{h\rightarrow 0^+}\frac{1}{h}\big(\norm{I+hP(x)A(t)}_\infty-1\big)\abs{z(t)}_\infty \\ &+\bar{p}\norm{H(t)}_\infty\sup_{t-\tau(t) \le s \le t}\abs{z(s)}_\infty+\bar{p}{\sup_t\abs{d(t)}_\infty}\\
=& \mu_\infty\big(P(x)A(t)\big)\abs{z(t)}_\infty+\bar{p}\norm{H(t)}_\infty \sup_{t-\tau(t) \le s \le t}\abs{z(s)}_\infty\\ &+\bar{p}{\sup_t\abs{d(t)}_\infty}
\end{split}
\end{equation} 
where we used the fact that $\norm{P(x)}_\infty=\max_i\abs{p_i(x_i)}\le\bar{p}$. Moreover, by definition of $\mu_{\infty}(\cdot)$, we get: 
\begin{align*}
&\mu_\infty\big(P(x)A(t)\big) :=  \max_i\Big\{p_i(x_i)\sum_{j\ne i}\abs{A_{ij}(t)}+p_i(x_i)A_{ii}(t) \Big\}\\
&\le  \max_i\bigg\{\int_0^1 p_i(x_i) \Big(\sum_{j\ne i}\abs{a_{ij} \partial_1 g_j(\eta x_j+(1-\eta)x_j^*)}\\
& -\partial_1c_i(\eta x_i+ (1-\eta)x_i^*)+a_{ii} \partial_1g_i(\eta x_i +(1-\eta) x_i^*)\Big) d\eta\bigg\}
\end{align*}
%%%%%%%%%%%%%%%%%%%%%%%%%%%%%%%%%%%%%%%%%%%%%%%%%%%%%%%
Then, from the above expression, {C2} implies that {$\mu_\infty (P(x)A(t))\le \max_i\{-p_i(x_i)\bar{\sigma}\}\le -\underline{p}\bar{\sigma}$}. Instead, from { C3} we have that $\norm{H(t)}_{\infty} \le \underline{\sigma}$. Hence (\ref{dini}) becomes: 
\begin{equation*}
\begin{split}
&D^+\abs{z(t)}_\infty \le \\
&-\underline{p}\bar{\sigma}\abs{z(t)}_\infty+\bar{p}\underline{\sigma}\sup_{t-\tau(t) \le s \le t}\abs{z(s)}_\infty+\bar{p}{\sup_t\abs{d(t)}_\infty}
\end{split}
\end{equation*}
Finally, { C4} makes it possible to apply {Lemma \ref{prop:halanay}} and this implies the desired upper bound.
\endproof

\subsubsection*{Hopfield neural networks}
Hopfield neural networks  are a special case of (\ref{cohen}) when $p_i(x_i)=1$ {and $c_i(x_i(t))=c_ix_i(t)$}, $\forall i$. The resulting model is then: 
%\GR{The index in the summations should be N, not n. Can you change throughout? (Incl. in V-B).}
\begin{align}\label{neural}
\dot{x}_i&=-c_i x_i(t)+\sum_{j=1}^{}a_{ij}g_j(x_j(t))\nonumber \\ &+\sum_{j=1}^Nb_{ij}{g_j^{(\tau)}}(x_j(t-\tau(t)))+{u_i}+d_i(t)
\end{align}
$i=1,\ldots,N$. We let again $x^\ast:=[x_1^\ast,\ldots,x_N^\ast]^T$ be the desired  equilibrium and give the following:
\begin{Corollary}\label{prop: neural}
Consider the Hopfield recurrent neural network (\ref{neural}). Assume that there exist some $\bar{\sigma},\underline{\sigma}>0$ such that, $\forall i=1,\ldots,N$ and $\forall t\ge 0$:
\begin{description}
\item[C1] ${-c_ix_i^*+\sum_{j=1}^{N}a_{ij}g_j(x_j^*)+\sum_{j=1}^Nb_{ij}{g_j^{(\tau)}}(x_j^*)+u_i=0}$;
\item[C2] $-c_i+a_{ii}\partial_1g_i(x_i)+\sum_{j\ne i}\abs{a_{ij}\partial_1g_j(x_j)} \le -\bar{\sigma}, \forall x_i, x_j\in\R$;
\item[C3] $\sum_{j=1}^N\abs{b_{ij}\partial_1{g_j^{(\tau)}}(x_j)} \le \underline{\sigma}$, $\forall x_j\in\R$;
\item[C4] {$\bar{\sigma} > \underline{\sigma}$}.
\end{description}
Then, the network is $\sL_{\infty}$-sISS. In particular, $\forall t\ge0$: 
\begin{align*}
\max_i\abs{x_i(t)-x_i^*} &\le {\max_i}\sup_{-\tau_0 \le s \le 0}\abs{x_i(s)-x_i^*}e^{-\hat \lambda t}\\
&+\frac{1}{\bar \sigma-\underline \sigma}{\max_i}\norm{d_i(\cdot)}_{\sL_\infty}, \forall N
\end{align*}
where $0<\hat \lambda=\inf_{t\ge 0}\{\lambda|\lambda(t) - \bar \sigma+\underline \sigma e^{\lambda(t)\tau(t)}=0\}$.
\end{Corollary}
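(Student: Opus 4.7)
The plan is to derive this corollary as a direct specialization of Proposition \ref{prop: cohen}. The Hopfield model (\ref{neural}) is recovered from the Cohen-Grossberg model (\ref{cohen}) by setting $p_i(x_i) \equiv 1$ (so that $\underline{p} = \bar{p} = 1$) and $c_i(x_i) = c_i x_i$ (so that $\partial_1 c_i(x_i) = c_i$ for all $x_i$). My task is therefore reduced to checking that hypotheses C1--C4 of the corollary imply the corresponding hypotheses of Proposition \ref{prop: cohen} under this specialization, and then to verifying that the resulting bound collapses to the one in the statement.

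First, I would match the conditions one by one. Condition C1 is identical to C1 of Proposition \ref{prop: cohen} once $c_i(x_i^\ast)$ is replaced by $c_i x_i^\ast$. Condition C2 coincides with C2 of Proposition \ref{prop: cohen} because, in the linear case, $\partial_1 c_i(x_i)$ is the constant $c_i$, so the bound holds pointwise in $x_i$ with $-c_i$ in place of $-\partial_1 c_i(x_i)$. Condition C3 is unchanged. Finally, with $\bar{p} = \underline{p} = 1$, the gap condition $\bar{p}\underline{\sigma} < \underline{p}\bar{\sigma}$ of Proposition \ref{prop: cohen} collapses to $\underline{\sigma} < \bar{\sigma}$, which is exactly C4 of the corollary. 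Hence all the hypotheses of Proposition \ref{prop: cohen} are in force.

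Applying Proposition \ref{prop: cohen} then yields the estimate
$$\max_i |x_i(t) - x_i^\ast| \le \max_i \sup_{-\tau_0 \le s \le 0} |x_i(s) - x_i^\ast| e^{-\hat\lambda t} + \frac{\bar{p}}{\underline{p}\bar\sigma - \bar{p}\underline{\sigma}} \max_i \|d_i(\cdot)\|_{\mathcal{L}_\infty},$$
and with $\bar{p} = \underline{p} = 1$ the disturbance prefactor reduces to $\frac{1}{\bar\sigma - \underline{\sigma}}$, matching the corollary. In the same way, the characteristic equation $\lambda - \underline{p}\bar\sigma + \bar{p}\underline{\sigma} e^{\lambda \tau} = 0$ that defines $\hat\lambda$ in Proposition \ref{prop: cohen} specializes to $\lambda - \bar\sigma + \underline{\sigma} e^{\lambda \tau} = 0$, as required.

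Because the corollary is a literal specialization of Proposition \ref{prop: cohen}, no new technical obstacle arises; the only point worth being explicit about in the write-up is that the linearity of the decay term makes the integral $\int_0^1 \partial_1 c_i(\eta x_i + (1-\eta) x_i^\ast)\, d\eta$ collapse to the constant $c_i$, which is why C2 of the corollary need not be stated in integrated form. If desired, one could alternatively re-run the Dini-derivative argument of Proposition \ref{prop: cohen} directly on (\ref{neural}), but this would simply reproduce the same chain of estimates with $P(x) \equiv I$ and the linear $A_{ii}$-entry $-c_i + a_{ii}\int_0^1 \partial_1 g_i\, d\eta$, so invoking the parent proposition is the cleanest route.
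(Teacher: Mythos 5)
Your proposal is correct and follows exactly the paper's route: the paper also proves this corollary by direct specialization of Proposition \ref{prop: cohen} with $p_i\equiv 1$ (so $\underline{p}=\bar{p}=1$) and $c_i(x_i)=c_ix_i$. Your write-up simply makes explicit the condition-by-condition matching that the paper leaves implicit.
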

\begin{proof}
Follows directly from Proposition \ref{prop: cohen}.
\end{proof}
\begin{Remark}
Essentially, Corollary \ref{prop: neural} states that, if there are disturbances on the network, and the matrix measure induced by $\abs{\cdot}_{\infty}$ is considered to study stability, {then not only the system is stable, as shown in \cite{CAO2014165}, but it is also scalable}.
\end{Remark}

\section{Numerical examples}
 {The code, data and parameters to replicate the results are at: \url{https://github.com/GIOVRUSSO/Control-Group-Code}}.
\subsubsection{Designing scalable formations}\label{sec:robotics_example}
We start with illustrating the use of Proposition \ref{prop: robotic} to design protocols guaranteeing $\mathcal{L}_{\infty}$-sIOS of network (\ref{hpmodel_linearized:1}) - (\ref{hpmodel_linearized:2}). As in \cite{lawton}, the parameters of the robots in (\ref{hpmodel}) are $m_i=10.1kg, I_i=0.13kgm^2, l_i=0.12m$, $\forall i$. We recall that, following (\ref{hpmodel_linearized:2}), the output of each robot is $\eta_i = C_i\chi_i$ and hence we consider the case where robots have access to the hand position of their neighbours and to the reference provided by the virtual leader. The coupling functions for the protocol (\ref{controlnu}) considered here are of the form
\begin{equation}\label{robotics_control_protocol}
\begin{split}
&\bar{h}_{ij}^{(\tau)}(\chi_i(t-\tau(t)),\chi_j(t-\tau(t)),t\big)=K_p\left(\left[\begin{matrix}
\chi_{j,1}-\chi_{i,1} \\ \chi_{j,2}-\chi_{i,2}\end{matrix}\right]-\delta_{ji}^d\right) \\
&\bar{\underline{h}}_{il}\big(\chi_i(t),\chi_l(t),t\big)=\\
& K_{pl}\left(\left[\begin{matrix}
\chi_{l,1}-\chi_{i,1} \\ \chi_{l,2}-\chi_{i,2}\end{matrix}\right]-\delta_{li}^d\right) +K_{vl}\left[\begin{matrix}
\chi_{l,3}-\chi_{i,3} \\ \chi_{l,4}-\chi_{i,4}\end{matrix}\right]
\end{split}
\end{equation}
where $\delta_{ji}^d$ denotes the desired offset of agent $i$ from agent $j$, $\delta_{li}^d$ is the desired offset of agent $i$ from virtual leader and $K_p, K_{pl}, K_{vl}$ are $2 \times 2$ diagonal matrices. The offsets are set to achieve the desired formation pattern where the (hand position of the) robots move, at a constant linear speed, in concentric {\em circles} (with the $k$-th circle consisting of $4k$ robots) following the trajectory provided by the virtual leader (see \textcolor{black}{Figure \ref{formation_configuration}}). In the same figure, the desired formation is illustrated when the robots are arranged in a formation  of $3$ concentric circles.

\textcolor{black}{In our first set of simulations, described next, a given robot on the $k$-th circle is connected to: (i) the robots immediately ahead and behind on the same circle; (ii) the closest robot on circle $k+1$ (if any) and the closest robot on {circle} $k-1$ (if any).} 
\begin{figure}
\centering
\includegraphics[width=0.75\columnwidth]{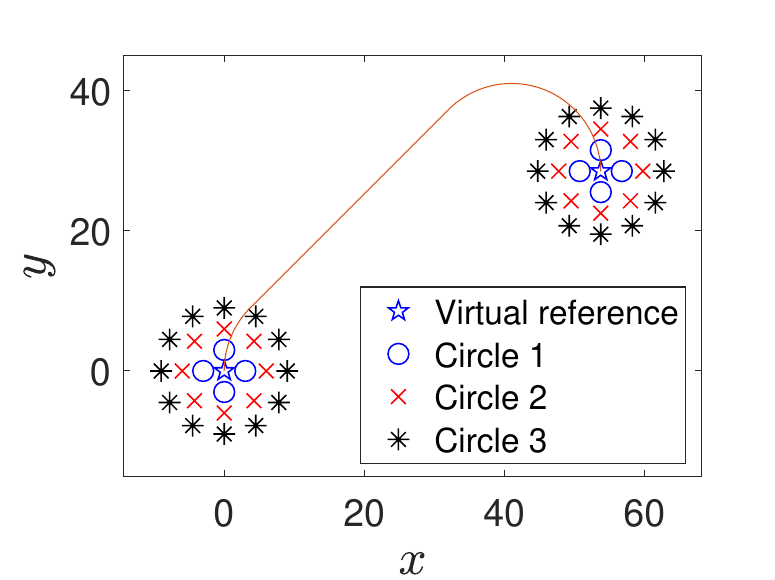}
\caption{Trajectory of the hand position by virtual leader and an example of formation consisting of $3$ concentric circles.}
\label{formation_configuration}
\end{figure}
Also, in the simulations \textcolor{black}{one robot} in the inner circle (i.e. circle $1$) is affected by the disturbances \textcolor{black}{$[d^f,d^{q}]^T=[2\sin(t)e^{-0.2t},{2}\sin(t)e^{-0.2t}]^T$} and the delay is set to $\tau(t)=\tau=0.1s$. Given this set-up, we note that {C1} of {Proposition \ref{prop: robotic}} is guaranteed by the definition of the coupling functions (\ref{robotics_control_protocol}). Therefore, the only conditions that need to be verified are {C2} and {C3}. These can be satisfied by properly choosing $K_p, K_{pl}, K_{vl}$ in (\ref{controlnu}). In particular, we found numerically that the conditions were satisfied by picking the control weights $\black{K_p=0.035I_2}, K_{pl}=0.7I_2, K_{vl}=I_2$, $\forall i,j$. 

With these parameters, \textcolor{black}{we first investigate how the maximum deviation of the hand position (from its desired position) changes as a function of the number of agents in the formation. To this aim, starting with a formation of $1$ circle, we repeatedly simulated the formation by increasing, at each simulation, the number of circles. Then, for each simulation, we recorded the maximum deviation experienced on each circle and finally plot the maximum deviation on each circle across all the simulations. The result of this process is illustrated in Figure \ref{max_deviation} (left panel). Such a figure clearly shows that the disturbance on the first circle is not amplified across the other circles, in accordance with our theoretical predictions.}
\begin{figure}
\centering
\includegraphics[width=0.49\columnwidth]{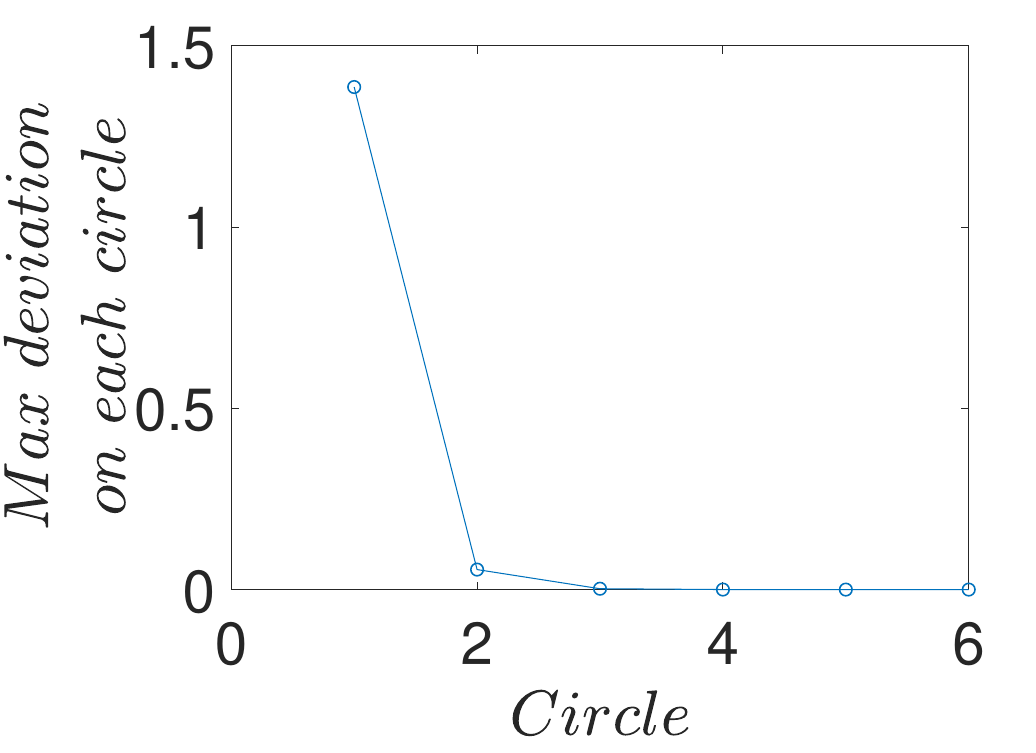}
\includegraphics[width=0.49\columnwidth]{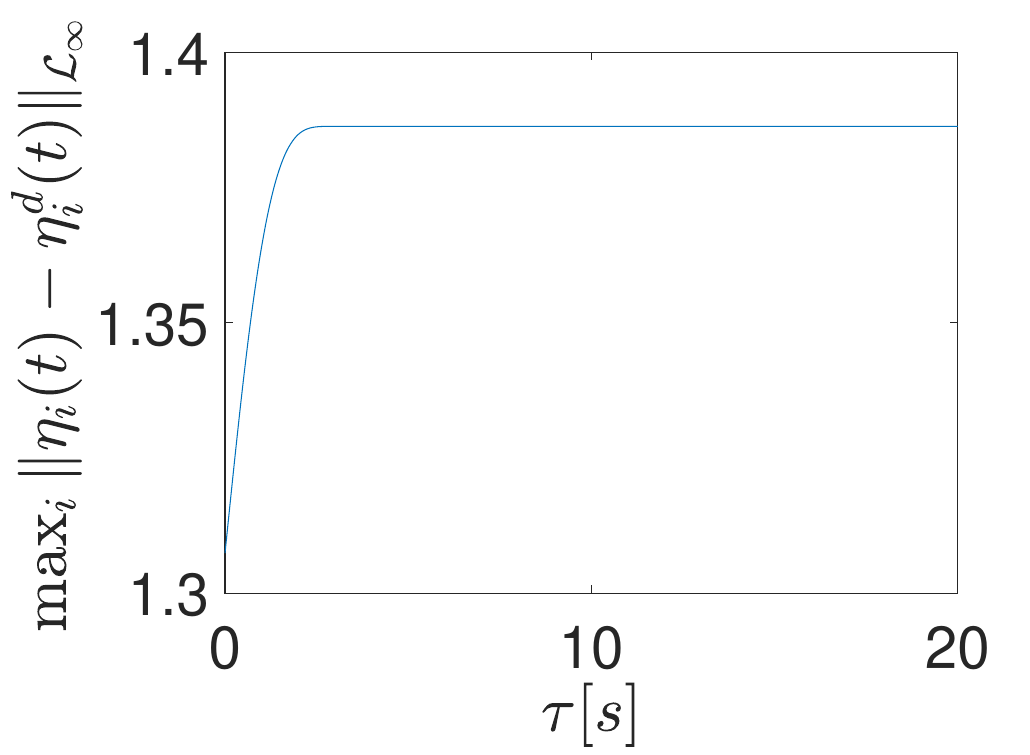}
\caption{\textcolor{black}{Left panel - maximum deviation of the hand position from the desired position (in meters). The figure was obtained by: (i) simulating the network for different number of circles; (ii) recording, at each simulation, the largest deviation on a given circle; (iii) plotting the largest deviations on each circle across all the simulations.} Right panel - maximum deviation of the hand position (in meters) across all the robots as a function of $\tau$ (robots arranged in $6$ circles). In both panels, the same robot was perturbed across all the simulations.}
\label{max_deviation}
\end{figure}
As next step, we also investigate how, for the formation, the maximum deviation of the hand position changes as a function of the delay $\tau$. The results are illustrated in \textcolor{black}{Figure \ref{max_deviation} (right panel)}, which shows that deviations stay bounded when the delay increases (the figure is obtained for the formation with $6$ circles). 

Finally, with our second set of simulations, we further investigate our scalability conditions by considering again the formation in {Figure \ref{formation_configuration}} this time with $14$ {circles} and: (i) each robot on the $i$-th circle connected to the one ahead and behind on the same circle and  with the closest robot on circle $i-1$ (if any); (ii) one robot on the inner circle (i.e. circle $1$) is affected by the same disturbance used in Figure \ref{max_deviation}; \textcolor{black}{(iii) the delay  set to $\tau = 0.1$}s. We simulated the network first with the same control weights used in Figure \ref{max_deviation} \textcolor{black}{(i.e. satisfying the conditions of Proposition \ref{prop: robotic})} and then with a set of control weights that make the network stable but not scalable. As clearly shown in Figure \ref{nonscalable} \textcolor{black}{(top panel)}, scalability prohibits the amplification of perturbations propagating through the network. Instead, when the network is designed to be \textcolor{black}{stable but not scalable}, the disturbances grow when propagating before being attenuated \textcolor{black}{(see middle panel of Figure \ref{nonscalable})}. \textcolor{black}{Finally, the unstable behavior illustrated in the bottom panel of Figure \ref{nonscalable} has been obtained by considering the same protocol and control weights used in the top panel of the figure but this time with each agent  connected to all the others. }  

\textcolor{black}{Interestingly, the simulations in the bottom panel of Figure \ref{nonscalable} show that, perhaps counter-intuitively, a scalable network can be made unstable if new connections are added. We also note that the unstable behavior numerically observed in Figure \ref{nonscalable} (bottom panel) cannot be explained with our sufficient conditions for scalability. A related phenomenon has been also recently observed in the context of synchronization of diffusively coupled delay-free networks. In particular, in \cite{https://doi.org/10.1002/rnc.3863,PhysRevE.95.042312} it has been  shown how a stable synchronization manifold can be made unstable by increasing the connections between certain nodes. With respect to this, an approach to theoretically explain the behavior observed in the bottom panel of Figure \ref{nonscalable}, might be that of extending the sufficient conditions for desynchronization of  \cite{https://doi.org/10.1002/rnc.3863,PhysRevE.95.042312}\footnote{\textcolor{black}{We leave the study of this open problem for our future research}}. }

\begin{figure}[ht]
\centering
\includegraphics[width=\columnwidth]{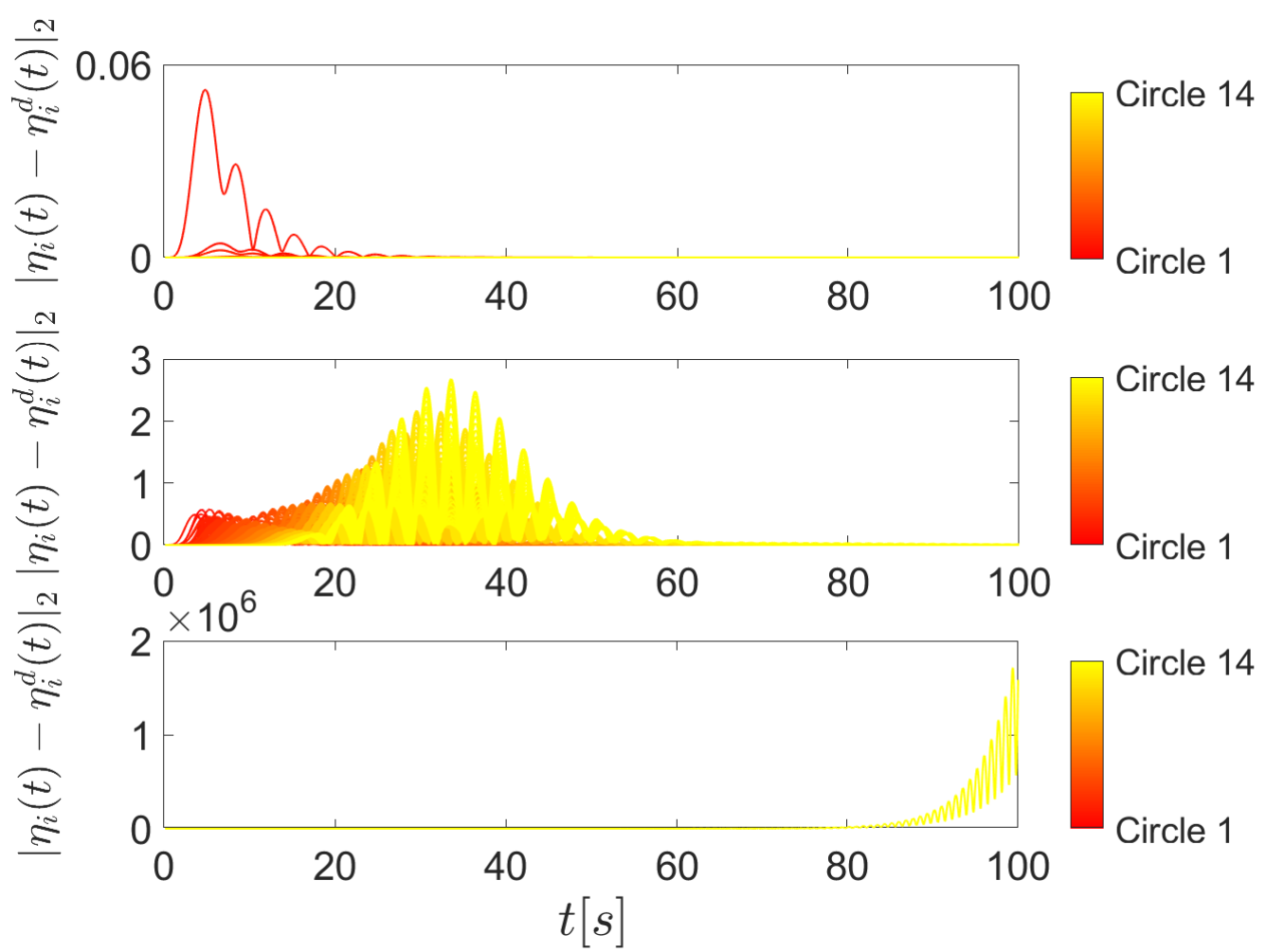}
\caption{\textcolor{black}{\textcolor{black}{Distances of the hand position of each robot from the desired position (in meters).} Top panel - the scalable behaviour for the formation of Figure \ref{formation_configuration} with $14$ {circles}. Middle panel - deviations for the  network when the formation is stable but not scalable. Bottom panel - the scalable network in the top panel becomes unstable by adding links.  Agents on the same circle have the same colour (as in \cite{STUDLI2017157} all the agents except the perturbed one are shown). }}
\label{nonscalable}
\end{figure}

\subsubsection{Designing scalable Hopfield neural networks}
We now turn our attention to the problem of designing a $\mathcal{L}_{\infty}$-sISS scalable Hopfield neural network. We consider (\ref{neural}) with $60$ neurons and: (i) each neuron connected to all the others; (ii) $c_i=10$, $\forall i$; (iii) non-negative weights. All the activation functions are affected by the delay $\tau(t)=\tau=1s$ (i.e. $a_{ij}=0$, $\forall i,j$, in the model). Also, {all the neurons have ${g^{(\tau)}(x)}=\tanh(x)$ as the activation function}. In order to numerically validate the conditions of Corollary \ref{prop: neural}, we first  computed a set of weights verifying conditions C2 - C4 of Corollary \ref{prop: neural}\footnote{weights available at \url{https://github.com/GIOVRUSSO/Control-Group-Code}}. Then, we simulated the network without any disturbance and with {non-negative inputs $u_i$ (the specific inputs are at the  repository)}. This was done to find the unique stable equilibrium towards which the network converges (this is indeed the desired equilibrium, note how C1 is intrinsically satisfied by such equilibrium point). The behaviour of the network when there are no disturbances is illustrated in Figure \ref{equilibrium} (top panel). The network behaviour when the network is affected by disturbances is instead shown in Figure \ref{equilibrium} (bottom panel). In the figure, the deviations of the $x_i(t)$'s with respect to the equilibrium of the unperturbed network are shown. In the figure, $55$  neurons were perturbed at time $5s$ and $15s$, with constant disturbances having a random amplitude between $0$ and $10$ and duration of $1$s. Again, the figure illustrates that disturbances are attenuated within the network, in accordance with the findings. \textcolor{black}{Finally, we also considered the Hopfield network of  Figure \ref{Hopfield_new}, top panel. The activation functions were hyperbolic tangents and these were all affected by delays. We set $\tau(t) ={\tau}=0.1{s}$, $a_{ij}=0, b_{ij}= 15$, for the connections shown in the top panel of the figure and first picked $c_i = 27$, $\forall i$ so that the network was stable \textcolor{black}{but not scalable}. Then, we picked $c_i =32$, which allowed to fulfill the conditions of Corollary \ref{prop: neural} hence making the network scalable. As clearly illustrated in the bottom panels of Figure \ref{Hopfield_new}, in accordance with our  results, when the network is affected by disturbances scalability prohibits their amplification within the network.}

\begin{figure}
  \includegraphics[width=\columnwidth]{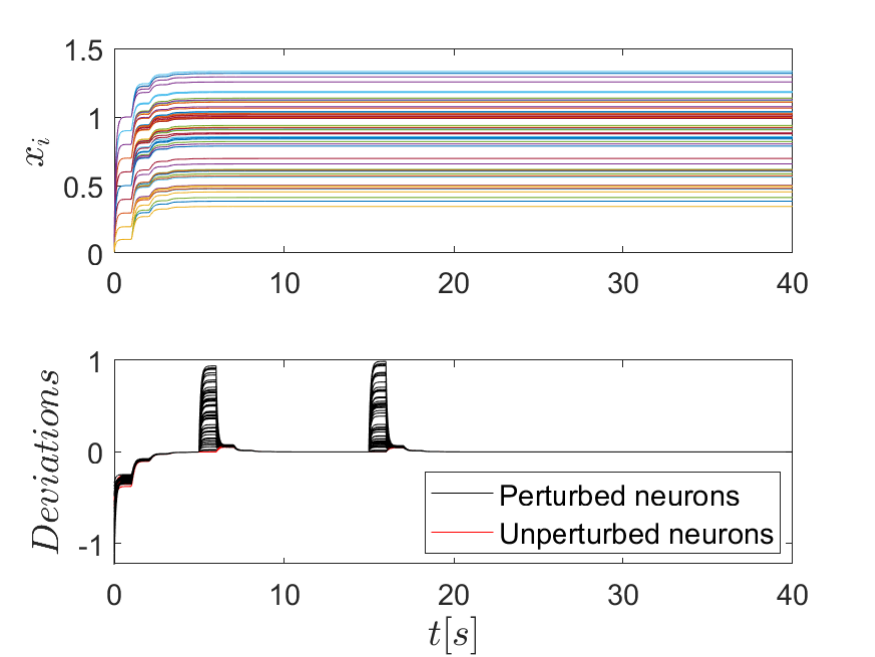}
\caption{Top panel - the network achieves its desired equilibrium when there are no disturbances. Bottom panel - deviations from the equilibrium. In black the deviations for the neurons directly affected by the disturbances (the others in red). Colours online.}
\label{equilibrium}
\end{figure}

\begin{figure}
\begin{minipage}{\columnwidth}
  {\includegraphics[width=\columnwidth, trim={0 180 0 90},clip]{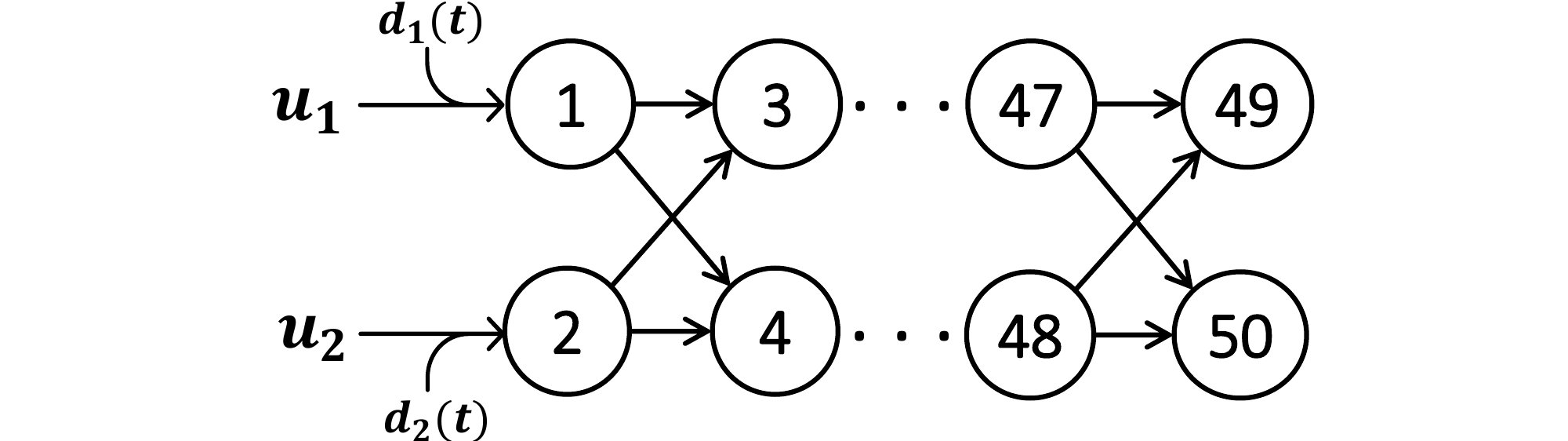}}
\end{minipage}
\begin{minipage}{\columnwidth}
  {\includegraphics[width=1\columnwidth]{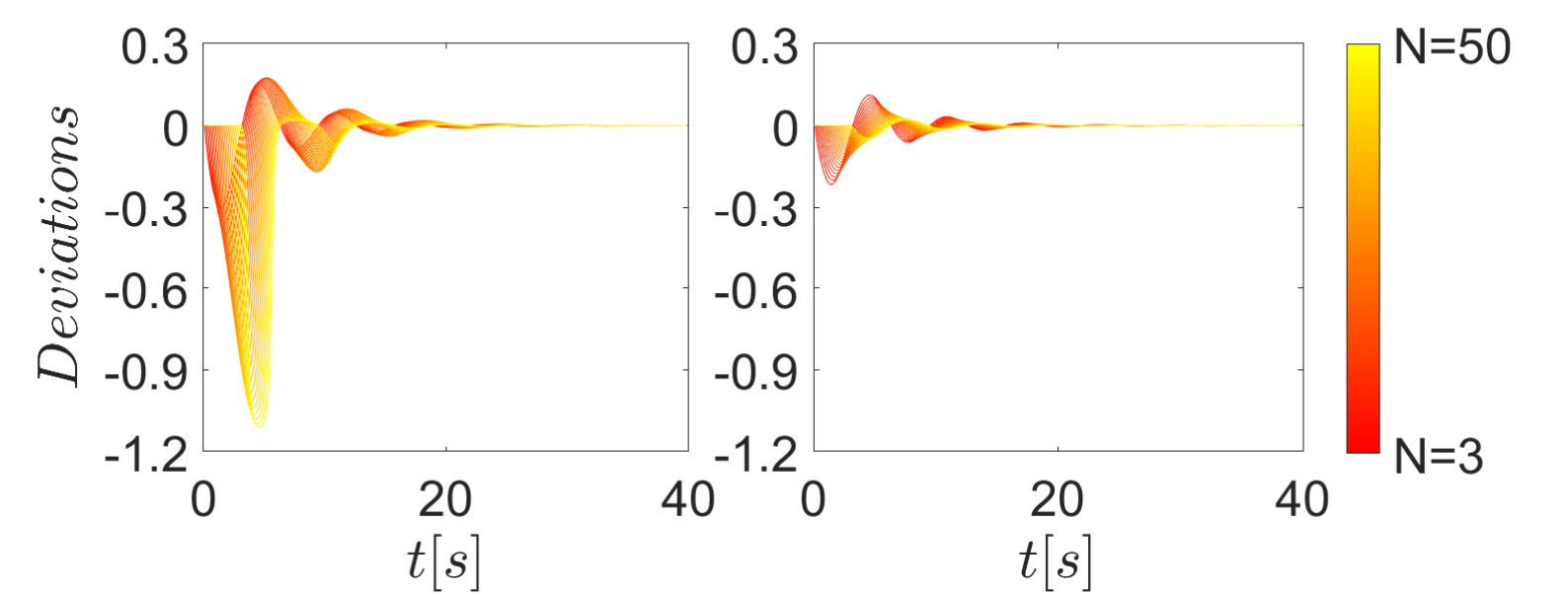}}
\end{minipage}
\caption{\textcolor{black}{Top panel - Hopfield network. Bottom panels - deviations from the equilibrium when the network is stable \textcolor{black}{but not scalable} (left panel) and when this is designed to be scalable (right panel). In all the simulations, neuron $1$ and $2$ were affected by the disturbance {$-10\sin(t)e^{-0.2t}$}. Also, $u_1 = 7$ and {$u_2 = 1$}.}}
\label{Hopfield_new}
\end{figure}

\section{Conclusions and discussion}\label{sec:conclusions}
We considered networks of possibly nonlinear heterogeneous agents coupled via possibly nonlinear protocols affected by delays and disturbances. For these networks, after introducing the notions of $\mathcal{L}_{\infty}$-sISS and $\mathcal{L}_{\infty}$-sIOS, we presented two sufficient conditions to assess these properties. The conditions can be turned into design guidelines  and we used our results to: (a) design distributed control protocols able to guarantee both tracking of a time-varying reference and $\mathcal{L}_{\infty}$-sIOS; (b)  design the activation functions (and their weights) of certain recurrent neural networks so that these are $\mathcal{L}_{\infty}$-sISS. The effectiveness of the results was illustrated via  simulations. Besides considering heterogeneous delays, future work might involve: (i) \textcolor{black}{devising scalability conditions that take into account bounds on the delays and investigate the conservativeness of our conditions}; (ii) \textcolor{black}{investigating, inspired by certain recent literature (see e.g. \cite{pmlr-v120-bonassi20a,pmlr-v120-revay20a}) on stable RNNs to model nonlinear input/ouput sequences, scalability for complex, time-varying, input/output patterns}; (iii) stochastic disturbances. \textcolor{black}{Finally, motivated by the numerical findings reported in the bottom panel of Figure \ref{nonscalable}, we are currently working towards devising sufficient conditions for the loss of scalability.}
\textcolor{black}{\subsubsection*{Acknowledgments}
The authors are grateful to the anonymous reviewers and the AE for their constructive feedback.}
\appendix
The derivations are inspired from \cite{lawton}, where the same model was considered but without disturbances. We let $x_i=[p^x_i,p^y_i,v_i,\theta_i,\omega_i]^T$, \textcolor{black}{$\bar u_i=[F_i,Q_i]^T$, $\bar d_i=[d^f_i,d^q_i]^T$} and aim at controlling the hand position $\eta_i=p_i+l_i\left[\cos{\theta_i}, \sin{\theta_i}\right]^T$, where $l_i$ is the distance of the hand position from the inertial position. We differentiate $\eta_i$ twice to get
\begin{align*}
\begin{split}
\ddot{\eta}_i=&\left[\begin{matrix} -v_i\omega_i\sin{\theta_i}-l_i\omega_i^2\cos{\theta_i} \\ v_i\omega_i\cos{\theta_i}-l_i\omega_i^2\sin{\theta_i} \end{matrix}\right] + \left[\begin{matrix} \frac{1}{m_i}\cos{\theta_i} & -\frac{l_i}{I_i}\sin{\theta_i} \\ \frac{1}{m_i}\sin{\theta_i} & \frac{l_i}{I_i}\cos{\theta_i}\end{matrix}\right]\left[\begin{matrix} F_i \\ Q_i\end{matrix}\right]\\
+&\left[\begin{matrix} \frac{1}{m_i}\cos{\theta_i} & -\frac{l_i}{I_i}\sin{\theta_i} \\ \frac{1}{m_i}\sin{\theta_i} & \frac{l_i}{I_i}\cos{\theta_i}\end{matrix}\right]\left[\begin{matrix} d^f_i \\ d^q_i\end{matrix}\right]
\end{split}
\end{align*}
Considering the diffeomorphism $\chi_i=T_i(x_i)=[p^x_i+l_i\cos{\theta_i}, p^y_i+l_i\sin{\theta_i}, v_i\cos{\theta_i}-l_i\omega_i\sin{\theta_i}, v_i\sin{\theta_i}+l\omega_i\cos{\theta_i}, \theta_i]^T = [\chi_{i,1}, \chi_{i,2}, \chi_{i,3}, \chi_{i,4}, \chi_{i,5}]^T$, we get $\dot{\chi}_{i,1}= \chi_{i,3}$, $\dot{\chi}_{i,2}= \chi_{i,4}$ and
\begin{align}\label{hpmodel}
    \begin{split}
        \left[\begin{matrix} \dot{\chi}_{i,3}\\ \dot{\chi}_{i,4}\end{matrix}\right]=& \left[\begin{matrix} -v_i\omega_i\sin{\chi_{i,5}}-l_i\omega_i^2\cos{\chi_{i,5}} \\ v_i\omega_i\cos{\chi_{i,5}}-l_i\omega_i^2\sin{\chi_{i,5}} \end{matrix}\right]\\
        +&\left[\begin{matrix} \frac{1}{m_i}\cos{\chi_{i,5}} & -\frac{l_i}{I_i}\sin{\chi_{i,5}} \\ \frac{1}{m_i}\sin{\chi_{i,5}} & \frac{l_i}{I_i}\cos{\chi_{i,5}}\end{matrix}\right] \textcolor{black}{\bar u_i}\\
        +&\left[\begin{matrix} \frac{1}{m_i}\cos{\chi_{i,5}} & -\frac{l_i}{I_i}\sin{\chi_{i,5}} \\ \frac{1}{m_i}\sin{\chi_{i,5}} & \frac{l_i}{I_i}\cos{\chi_{i,5}}\end{matrix}\right] \textcolor{black}{\bar d_i}\\
        \dot{\chi}_{i,5}=&-\frac{1}{2l_i} \chi_{i,3}\sin{\chi}_{i,5}+\frac{1}{2l_i} \chi_{i,4}\cos{\chi}_{i,5}\\  
          \end{split}
\end{align}
with $\eta_i= [\chi_{i,1}, \chi_{i,2}]^T$. The feedback linearizing control is
\begin{equation}\label{feedback_lin_contr}
\begin{split}
    &  \textcolor{black}{\bar u_i} = \left[\begin{matrix} \frac{1}{m_i}\cos{\chi_{i,5}} & -\frac{l}{I_i}\sin{\chi_{i,5}} \\ \frac{1}{m_i}\sin{\chi_{i,5}} & \frac{l}{I_i}\cos{\chi_{i,5}}\end{matrix}\right]^{-1}\cdot\\
    &\cdot\left( \textcolor{black}{\bar \nu_i}(t)-\left[\begin{matrix} -v_i\omega_i\sin{\chi_{i,5}}-l\omega_i^2\cos{\chi_{i,5}} \\ v_i\omega_i\cos{\chi_{i,5}}-l\omega_i^2\sin{\chi_{i,5}} \end{matrix}\right] \right)
\end{split}
\end{equation}
with $\bar\nu_i:=\left[\bar\nu_{i,1}(t) \ \ \bar\nu_{i,2}(t)\right]^T$. Hence,  noticing that the zero dynamics is stable, yields the following  reduced dynamics
\begin{align*}
    \begin{split}
        \dot{\chi}_{i,1}&= \chi_{i,3}\\
        \dot{\chi}_{i,2}&= \chi_{i,4}\\
        \left[\begin{matrix} \dot{\chi}_{i,3}\\ \dot{\chi}_{i,4}\end{matrix}\right]&= \textcolor{black}{\bar \nu_i(t) + \bar b_i(t)\bar d_i(t)} \\
    \end{split}
\end{align*}
with $\eta_i= [\chi_{i,1}, \chi_{i,2}]^T$ and where $\bar b_i(t):= \left[\begin{matrix} \frac{1}{m_i}\cos{\theta_i(t)} & -\frac{l_i}{I_i}\sin{\theta_i(t)} \\ \frac{1}{m_i}\sin{\theta_i(t)} & \frac{l_i}{I_i}\cos{\theta_i(t)}\end{matrix}\right]$. 
This is the dynamics considered in Section \ref{sec:robotics} where $\chi_i:=[\chi_{i,1},\chi_{i,2},\chi_{i,3},\chi_{i,4}]^T$.

\bibliographystyle{IEEEtran}
%\bibliography{Reference}
% Generated by IEEEtran.bst, version: 1.14 (2015/08/26)

\end{document}